\newcommand{\pic}[2]{\includegraphics[width=#2\textwidth]{#1.pdf}}
\newcommand{\B}[0]{$B_{1,1}^s\;$}
\newcommand{\R}[0]{\mathbb{R}}
\newcommand{\Z}[0]{\mathbb{Z}}
\newcommand{\T}[0]{\top}
\newcommand{\N}{\mathcal{N}}
\newcommand{\mepow}[2]{#1\hspace{-0.12em}\cdot\hspace{-0.12em}10^{#2}}
\newcommand{\epow}[2]{#1\cdot10^{#2}}
\newcommand{\half}{\frac{1}{2}}
\newcommand{\spaces}{\hspace{2 em}}
\newcommand{\defeq}{\coloneqq}
\newcommand{\gd}{g_\text{1D}}
\newcommand{\J}[1]{J_{#1}}
\newcommand{\Var}{\mathrm{Var}}
\DeclareMathOperator*{\argmin}{arg\,min}
\renewcommand{\thetheorem}{\thesection.\arabic{theorem}}
\newcommand{\resetcounters}{
  \setcounter{theorem}{0}
  \setcounter{equation}{0}
  \setcounter{figure}{0}
  \setcounter{table}{0}
  \setcounter{algorithm}{0}
}
\newcounter{subassume}[theorem]
\renewcommand{\thesubassume}{(\textit{\roman{subassume}})}
\renewcommand{\p@subassume}{\thetheorem}
\newcommand{\assumeitem}{
  \refstepcounter{subassume}%
  \makebox[2em][l]{\thesubassume}\ignorespaces}
\title{Bayesian inverse problems with $l_1$ priors: a
  randomize-then-optimize approach}
\newcommand{\algmargin}{\the\ALG@thistlm}
\newlength{\whilewidth}
\algnewcommand{\parState}[1]{\State%
  \parbox[t]{\dimexpr\linewidth-\algmargin}{\strut #1\strut}}
\author{Zheng Wang%
  \thanks{Department of Aeronautics and Astronautics, MIT, Cambridge, MA 02139 (\email{zheng\_w@mit.edu}, \email{ymarz@mit.edu}.)}%
  \and
  Johnathan M.~Bardsley%
  \thanks{Department of Mathematical Sciences, Montana, University of Montana, Missoula, MT 59812 (\email{bardsleyj@mso.umt.edu}.)}
  \and
  Antti Solonen%
  \thanks{Department of Mathematics and Physics, Lappeenranta University of Technology, Lappeenranta, Finland (\email{antti.solonen@gmail.com}.)}
  \and
  Tiangang Cui%
  \thanks{School of Mathematical Sciences, Monash University, Victoria 3800, Australia (\email{tiangang.cui@monash.edu}.)} 
  \and
  Youssef~M. Marzouk%
  \footnotemark[1]
}
\begin{document}

\maketitle

\begin{abstract}
Prior distributions for Bayesian inference that rely on the $l_1$-norm of the parameters are of considerable interest, in part because they promote parameter fields with less regularity than Gaussian priors (e.g., discontinuities and blockiness). These $l_1$-type priors include the total variation (TV) prior and the Besov space \B prior, and in general yield non-Gaussian posterior distributions. Sampling from these posteriors is challenging, particularly in the inverse problem setting where the parameter space is high-dimensional and the forward problem may be nonlinear.
This paper extends the randomize-then-optimize (RTO) method, an optimization-based sampling algorithm developed for Bayesian inverse problems with Gaussian priors, to inverse problems with $l_1$-type priors. We use a variable transformation to convert an $l_1$-type prior to a standard Gaussian prior, such that the posterior distribution of the transformed parameters is amenable to Metropolized sampling via RTO. We demonstrate this approach on several deconvolution problems and an elliptic PDE inverse problem, using TV or Besov space \B priors. Our results show that the transformed RTO algorithm characterizes the correct posterior distribution and can be more efficient than other sampling algorithms. The variable transformation can also be extended to other non-Gaussian priors.
\end{abstract}

\begin{keywords}
  Inverse problems, Bayesian inference, Monte Carlo methods
\end{keywords}

\begin{AMS}
  65J22, 62F15, 65C05
\end{AMS}

\section{Introduction}
Inverse problems are encountered in many fields of science and engineering---whenever unknown parameters in mathematical models of physical phenomena must be estimated from noisy, incomplete, and indirect measurements.
While inverse problems can be solved using a variety of approaches \cite{Tarantola}, the Bayesian statistical approach \cite{KaipioSomersalo,StuartActa} is particularly attractive as it offers a coherent framework for quantifying parameter uncertainty, while naturally accommodating different types of data and rich models of prior information.

We begin our discussion of the Bayesian approach to inverse problems by considering a parametric statistical model of the form
\begin{equation}
\label{eq:model}
y=f(\theta)+\epsilon,
\end{equation}
where $y\in\R^m$ is a vector of measurements, $f:\R^n\rightarrow\R^m$ is the forward model (also known as the ``parameter-to-observable map'') relating the unknown parameters $\theta\in\R^n$ to the measurements $y$, and $\epsilon\in\R^m$ is the measurement error. We assume that the error $\epsilon$ is a Gaussian random vector with mean zero and covariance matrix $\Gamma_{\rm obs}\in \R^{m\times m}$, i.e.,  $\epsilon\sim\N(0,\Gamma_{\rm obs})$. We will consider both linear and nonlinear forward models $f$. 

Next, define a prior probability density,
\[p(\theta)\propto\exp(-\lambda J(\theta)),\]
that encapsulates all \textit{a priori} information on the parameters $\theta$. Here, $\lambda \in \R$ is a hyperparameter and $J:\R^n\rightarrow\R$ is a prescribed function. Through Bayes' rule, the prior density and the likelihood function defined by \eqref{eq:model} together yield the posterior probability density of the parameters $\theta$:
\begin{equation}
\label{eq:posterior0} 
p(\theta|y) \propto p(y|\theta)p(\theta) \propto \exp\left(-\half\big\| f(\theta)-y\big\|^2_{\Gamma_{\rm obs}^{-1}}-\lambda J(\theta)\right).
\end{equation}

Solving the inverse problem in the Bayesian setting amounts to characterizing the posterior distribution \eqref{eq:posterior0}, e.g., computing posterior moments or other posterior expectations. A flexible way to do so is via sampling, which has been a topic of research in Bayesian inverse problems for decades (see, e.g., \cite{CalSom,KaipioSomersalo,NicFox,StuartActa}). A widely used class of algorithms for sampling from the posterior is Markov chain Monte Carlo (MCMC); see, e.g., \cite{bayesiandata,mcstatmodel,mcscientific,mcmcinpractice,GamLop} for a general introduction. Most MCMC algorithms build on the Gibbs \cite{gibbs} or general Metropolis-Hastings \cite{metropolis,hastings,green} constructions. For example, \cite{KaiKolSomVau,KaipioSomersalo} implement Gibbs samplers for use on large-scale  nonlinear inverse problems, while \cite{adaptm,dram} introduce adaptive Metropolis algorithms that work well on parameter inference problems of small to medium dimension. The need for adaptive algorithms underscores the idea that efficient MCMC sampling requires proposal distributions that capture the local or global structure of the target (posterior) distribution. Accordingly, the Metropolis-adjusted Langevin algorithm (MALA) \cite{mala} uses gradients of the target density to guide samples towards regions of higher probability, while \cite{MarWilBurGha} approximates local Hessians of the log-target density to construct Gaussian proposals for large-scale problems. Riemannian manifold MCMC \cite{riemann} may use even higher-order derivative information, along with Hamiltonian Monte Carlo (HMC) \cite{hmc,nuts} proposals. Another issue, particularly relevant to Bayesian inverse problems where $\theta$ represents the discretization of a distributed parameter, is that most MCMC algorithms have mixing rates that deteriorate as the discretization is refined \cite{convrwm,convrates,convdifflim}. Recent work \cite{pCN} has introduced Metropolis algorithms with discretization-invariant mixing properties. Dimension-independent likelihood-informed (DILI) samplers then combine discretization invariance with proposals informed by Hessians and other descriptors of the posterior geometry \cite{dili}. With the exception of HMC, however, even these relatively sophisticated samplers produce Gaussian proposals at each step. From a computing perspective, we also note that most MCMC algorithms are sequential in nature and may not scale well to massively parallel settings (e.g., via multiple chains) \cite{badparallelGeyer}.

This paper builds on recent work that explores the potential for \textit{optimization methods} to improve sampling. Broadly, these methods facilitate simulation from non-Gaussian proposal distributions that capture important aspects of posterior structure. 
Notable examples include randomized maximum likelihood \cite{rml}, implicit sampling \cite{generalimplicit,implicit}, and randomize-then-optimize (RTO) \cite{rto}. Our focus in this work is on the RTO approach. RTO uses repeated solutions of a randomly perturbed optimization problem to produce samples from a non-Gaussian distribution, which is used as a Metropolis independence proposal. Although it is more expensive to implement per sample than many simpler Gaussian proposals, it often yields better MCMC mixing. In addition, because the proposals can be generated independently and in parallel, RTO can easily take advantage of large-scale parallel computing environments. However, RTO is only defined for certain classes of problems; in the case of Bayesian inverse problems, it is defined for problems with Gaussian priors and Gaussian measurement error.

The main contribution of this paper is to extend RTO to \textit{non-Gaussian priors}, and to understand the efficiency of the resulting posterior sampling algorithm. We will focus on the case of $l_1$-type priors, but the approach can be used on other priors as well. In using $l_1$-type priors, we assume that there is a deterministic invertible matrix $D \in \R^{n\times n}$, such that the elements of the vector $D\theta$ are \textit{a priori} independent and endowed with identical Laplace distributions. Thus, the prior is of the form
\begin{equation}
\label{eq:l1prior}
p(\theta) \propto  \exp{\left(-\lambda \sum_{i=1}^n |(D\theta)_i | \right)},
\end{equation}
where $\lambda \in \R$ is a hyperparameter. This choice yields a posterior of the form
\begin{equation}
p(\theta|y) \propto \exp\left(-\half\Vert f(\theta)-y\Vert^2_{\Gamma_{\rm obs}^{-1}}-\lambda \sum_{i=1}^n |(D\theta)_i |\right).\label{eq:posterior}
\end{equation}
For what is perhaps the most common $l_1$-type prior used in Bayesian inverse problems, $D$ is the discrete one-dimensional derivative (or difference) matrix. This choice yields the total variation (TV) prior, which is related to the well-known regularization functional that penalizes the variation of a signal in order to promote a blocky, discontinuous solution \cite{tv,Vogel}.  The TV prior can be derived from the assumption that the increments (i.e., the differences between neighboring parameter node values) are i.i.d.\ Laplace random variables \cite{BarLaplace}, and it has the form \eqref{eq:l1prior} only when $\theta$ is the discretization of a one-dimensional signal. Another common class of $l_1$-type priors are the Besov space \B priors \cite{discretization-invariant}, where $D$ is now a matrix representing a discrete wavelet transform \cite{daubechies}; for the use of Besov priors on large-scale imaging test cases, see \cite{DauDefDeM,MueSil}. These priors have the advantage that even in two or more dimensions, they retain the form \eqref{eq:l1prior} and hence the techniques of this paper can be used. Besov priors (with suitable parameters) have been shown to be discretization invariant \cite{discretization-invariant,besov}, in that they yield posterior means that converge under mesh refinement. 

We extend RTO to the problem of sampling from \eqref{eq:posterior} by introducing a multivariate ``prior transformation.'' This transformation deterministically couples a random variable with an $l_1$-type prior to one with a Gaussian prior, and thus enables the use of RTO. A similar transformation for a scalar parameter $\theta$ has been suggested in \cite{Oliver_arXiv:1507.08563}. The present multivariate transformation is more general, however. To the best of our knowledge, it has not been previously proposed, nor has its impact on sampling been investigated.
After modifying the RTO algorithm to incorporate the transformation, we conduct a simple comparison of the resulting method with other algorithms, and then focus on numerically exploring the factors that influence its efficiency.

More broadly, variable transformations have been used to improve sampling in \cite{vartrans,mattMCMC}. For instance, \cite{mattMCMC} learns a parameterized multivariate transformation, designed to approximately Gaussianize an arbitrary target distribution, adaptively during MCMC. \cite{vartrans} introduces fixed isotropic (i.e., $\Vert \theta \Vert$--dependent) transformations to obtain target distributions with super-exponentially light tails, so that random-walk Metropolis sampling is geometrically ergodic. In a similar fashion, we use our prior transformation to obtain a posterior distribution to which we can apply RTO. We also describe extensions of our approach to more general priors: first, when any \textit{exact} (e.g., closed-form) coupling between the prior and a standard Gaussian is available, and second, when the prior transformation is only \textit{approximate}. In the latter case, we modify the Metropolis step of our RTO sampler to correct for error in the prior transformation.

The remainder of the paper is organized as follows. We begin in Section~\ref{sec:rto} with a description of the RTO algorithm \cite{rto}. Then, in Section~\ref{sec:trans}, we describe prior transformations that turn \eqref{eq:posterior} into a target density amenable to RTO sampling. Finally, in Section~\ref{sec:ex}, we present several numerical examples and comparisons of our method with other MCMC algorithms.

\resetcounters

\section{Randomize-then-optimize} \label{sec:rto}
In the context of Bayesian inverse problems, the randomize-then-optimize (RTO) \cite{rto} algorithm can be used to sample from the posterior distribution if the prior distributions on the parameters $\theta$ and the measurement error $\epsilon$ are both Gaussian. It generates proposal samples through optimization, and then ``corrects'' these samples using either importance sampling or Metropolis-Hastings. Here, we briefly review the original RTO algorithm; for simplicity, we use notation slightly different from that of \cite{rto}.

\subsection{Form of the target distribution} 
RTO requires that the target distribution be of a specific form; in particular, it requires that the target density (which for the purposes of this paper is the posterior density of $\theta$) be written as
\begin{equation}  
p(\theta\vert y ) \propto \exp\left(-\half \big\Vert F(\theta) \big\Vert ^2\right), \label{eq:lsform}
\end{equation}
where $F(\theta)$ is a vector-valued function of the parameters $\theta$.

Given a Gaussian prior and Gaussian measurement errors, we can, without loss of generality, use linear transformations to ``whiten'' the prior and the error model so that the inverse problem has the form
\begin{equation} 
y = f(\theta) + \epsilon, \spaces \epsilon \sim N(0,I_m), \spaces \theta \sim N(\theta_0 ,I_n), \label{eq:white_form}
\end{equation}
where $\theta_0 \in \R^n$ is the prior mean; and $I_n$ and $I_m$ are identity matrices of size $n$ and $m$, respectively. The resulting posterior density is given by
\[
p(\theta\vert y ) \propto \exp{\Bigg(-\half\bigg\Vert \begin{bmatrix} \theta\\ f(\theta) \end{bmatrix} - \begin{bmatrix} \theta_0 \\ y \end{bmatrix}\bigg\Vert ^2\Bigg)}.
 \]
This density is in the form \eqref{eq:lsform}, where $F(\theta)=\left[\begin{array}{c}\theta - \theta_0 \\f(\theta) - y \end{array}\right]$ and $F :\R^n\rightarrow\R^{n+m}$.

\subsection{The RTO--Metropolis-Hastings algorithm}

\newcommand{\htheta}{\theta_\text{prop}}
\newcommand{\hthetai}{\theta_\text{prop}^{(i)}}

We now outline how to use RTO to sample from a posterior of the form \eqref{eq:lsform}. First, a linearization point $\bar{\theta}$ is found and fixed throughout the algorithm. In \cite{rto}, $\bar{\theta}$ is set to be the posterior mode, though this is not the only possible or useful choice. To obtain the posterior mode, we solve
\begin{equation} \label{eq:opmode} \overline{\theta} = \argmin_{\theta} \half \big\Vert  F(\theta) \big\Vert ^2.\end{equation}
Second, the Jacobian of $F$, which we denote as $\J{F}$, is evaluated at $\overline{\theta}$, and an orthonormal basis $\overline{Q} \in \R^{(m+n)\times n}$ for the column space of $\J{F}(\overline{\theta})$, which we denote as $\mathrm{col}(\J{F}(\overline{\theta}))$, is computed through a thin-QR factorization of $\J{F}(\overline{\theta})$. Third, independent samples $\xi^{(i)}$ are drawn from an $n$-dimensional standard Gaussian, and proposal points $\hthetai$ are found by solving the optimization problem
\begin{equation} \label{eq:optprob}
\hthetai = \argmin_{\theta} \half \left\Vert  \overline{Q}^\T F(\theta) - \xi^{(i)} \right\Vert ^2  \end{equation}
for each sample  $\xi^{(i)}$. Under conditions described in \cite{rto} and listed in Assumption~\ref{assump}, the points $\hthetai$ are distributed according to the proposal density
\begin{equation}
q(\htheta) = (2 \pi)^{-\frac{n}{2}} \left\vert \overline{Q}^\T \J{F}(\htheta)\right\vert  \exp\left(-\frac12 \Big\Vert  \overline{Q}^\T F(\htheta) \Big\Vert ^2\right),\label{eq:RTOproposal}
\end{equation}
where $\left\vert \cdot \right\vert $ denotes the absolute value of the matrix determinant. We focus on using this distribution as an independence proposal in Metropolis-Hastings, though it can also be used in importance sampling. The Metropolis-Hastings acceptance ratio, for a move from a point $\theta^{(i-1)}$ to the proposed point $\hthetai$, is
\[
\frac{p(\hthetai\vert y )q(\theta^{(i-1)})}{p(\theta^{(i-1)}\vert y )q(\hthetai)} =\frac{w(\hthetai)}{w(\theta^{(i-1)})},
\]
where $w(\theta)$ are
\begin{equation}\label{eq:w(theta)}
w(\theta) \defeq \left\vert \overline{Q}^\T \J{F}(\theta)\right\vert ^{-1} \exp\left(-\frac12 \big\Vert F(\theta)\big\Vert ^2+\frac12 \Big\Vert  \overline{Q}^\T F(\theta) \Big\Vert ^2\right).
\end{equation}
The resulting MCMC method, which we call RTO--Metropolis-Hastings (RTO-MH), is summarized in Algorithm~\ref{alg:rto}.

\begin{algorithm}[htbp]
\caption{RTO-MH} \label{alg:rto}
\begin{algorithmic}[1]
\State Find $\overline{\theta}$ (e.g., the posterior mode) using \eqref{eq:opmode}
\State Determine $\J{F}(\overline{\theta})$, the Jacobian of $F$ at $\overline{\theta}$
\State Compute $\overline{Q}$, whose columns are an orthonormal basis for $\mathrm{col}(\J{F}(\overline{\theta}))$
\For {$i = 1, \ldots, n_\textrm{samps}$} in parallel
  \State Sample $\xi^{(i)}$ from a standard $n$-dimensional Gaussian
  \State Solve for a proposal sample $\hthetai$ using \eqref{eq:optprob}
  \State Compute $w(\hthetai)$ from \eqref{eq:w(theta)}
\EndFor

\State Set $\theta^{(0)} = \overline{\theta}$
\For {$i = 1, \ldots, n_\textrm{samps}$} in series
\State Sample $v$ from a uniform distribution on [0,1]
\If {$v < \left. w(\hthetai) \right/ w(\theta^{(i-1)})$}
  \State $\theta^{(i)}$ = $\hthetai$
\Else
  \State $\theta^{(i)}$ = $\theta^{(i-1)}$
\EndIf
\EndFor
\end{algorithmic}
\end{algorithm}

\begin{rem} Other choices for the matrix $\overline{Q}$ used in \eqref{eq:optprob} and \eqref{eq:RTOproposal} are possible, provided that
Assumption~\ref{assump}, which leads to the sampling density $q(\theta)$ in \eqref{eq:RTOproposal}, is satisfied. 
Also, in the computation of the Metropolis acceptance ratio, one can use a factorization of $\J{F}(\theta)$ or $\overline{Q}^\T \J{F}(\theta)$ and take advantage of properties of the $\log$ function; e.g., if $Q_\theta R_\theta=\overline{Q}^\T \J{F}(\theta)$ is the $QR$ factorization of $\overline{Q}^\T \J{F}(\theta)$, then
\[
\log \left\vert \overline{Q}^\T \J{F}(\theta)\right\vert =\sum_{i=1}^n\log [R_\theta]_{ii}.
\]
\end{rem}

\resetcounters

\section{RTO-MH with a prior transformation} \label{sec:trans}
The previous section showed how Bayesian inverse problems with Gaussian priors and Gaussian measurement errors yield posterior densities that can be written in the form \eqref{eq:lsform}, as required by RTO. Now we propose a technique that uses RTO to sample from a posterior resulting from a Gaussian measurement model and a \textit{non-Gaussian} prior. This is accomplished via a change of variables that transforms the non-Gaussian prior defined on the physical parameter $\theta \in \R^n$ to a standard Gaussian prior defined on a reference parameter $u \in \R^n$. The caveat is that the transformed forward model, now viewed as a function of $u$, is the original forward model composed with the nonlinear mapping function, and hence the transformation adds complexity to $f$. 

\subsection{Transformations for \texorpdfstring{$l_1$}{l1}-type priors}

In the following subsections, we exemplify this approach for $l_1$-type priors. First, we describe the transformation of single parameter endowed with a Laplace prior (Section~\ref{sec:1d}). We then extend that example to construct a transformation of multiple parameters for any $l_1$-type prior (Section~\ref{sec:md}). Finally, we discuss general prior transformations and summarize the algorithm for performing RTO with a prior transformation (Section~\ref{sec:rtoprior}).

\subsubsection{Single parameter with a Laplace prior} \label{sec:1d}
In this subsection, we consider an inverse problem of the form \eqref{eq:model} but with only a single parameter and a single observation, $n=m=1$:
\[ y = f(\theta) + \epsilon, \spaces \epsilon \sim N(0,\sigma_{\textrm{obs}}^2),\]
where $\sigma_{\textrm{obs}} \in \R^+$ is the standard deviation of the error. Instead of a Gaussian prior on $\theta$, we use a Laplace prior
\[ p(\theta) \propto \exp{\left( - \lambda  \vert\theta \vert \right).} \]
Then, the posterior has the form
\begin{equation}\label{eq:1Dpost}
 p(\theta \vert y) \propto \exp{\left(- \half\left( \frac{f(\theta)-y}{\sigma_{\textrm{obs}}} \right)^2 - \lambda  \vert \theta \vert \right)}. \end{equation}
Due to the Laplace prior, $p(\theta \vert y)$ cannot directly be written in the form \eqref{eq:lsform}.

Let us construct an invertible mapping function $\gd : \R \rightarrow \R$ that relates a Gaussian reference random variable $u \in \R$ to the Laplace-distributed physical parameter $\theta \in \R$, such that $\theta = \gd(u)$. A monotone transformation that achieves this goal is
\begin{equation}\label{eq:g1D} 
\gd(u) \equiv \mathcal{L}^{-1}\left(\varphi(u)\right) = -\frac{1}{\lambda} \mathrm{sign}\left(u\right) \log{\left(1 - 2 \left \vert \varphi(u) -\half \right \vert \right)},
\end{equation}
where $\mathcal{L}$ is the cumulative distribution function (cdf) of the Laplace distribution and $\varphi$ is the cdf of the standard Gaussian distribution.
To prove that the reference random variable is in fact standard Gaussian, we calculate its cdf as:
\begin{align*}
\mathbb{P}(u<u_0) &= \mathbb{P}(\gd^{-1}(\theta) < u_0) = \mathbb{P}(\theta < \gd(u_0)) \\
&= \mathcal{L}(\gd(u_0)) = \mathcal{L}\circ\mathcal{L}^{-1}\circ\varphi(u_0) = \varphi(u_0).
\end{align*}
Hence, this mapping function indeed transforms a standard Gaussian reference random variable $u$ to the Laplace-distributed parameter $\theta$, and thus
$p(u) \propto \exp{\left(-\half u^2 \right)}.$

 The mapping function $\gd$ and its derivative are depicted in Figure~\ref{fig:map}. The function is monotone, bijective, and continuously differentiable. Its derivative is
\[ \gd'(u) = \frac {\varphi'(u)}{\lambda \varphi( - \vert u \vert)}, \]
where $\varphi'(u) = \frac{1}{\sqrt{2\pi}} \exp{\left(-\half u^2\right)}$ is the probability density function of the standard Gaussian distribution.

\begin{figure}[htbp]
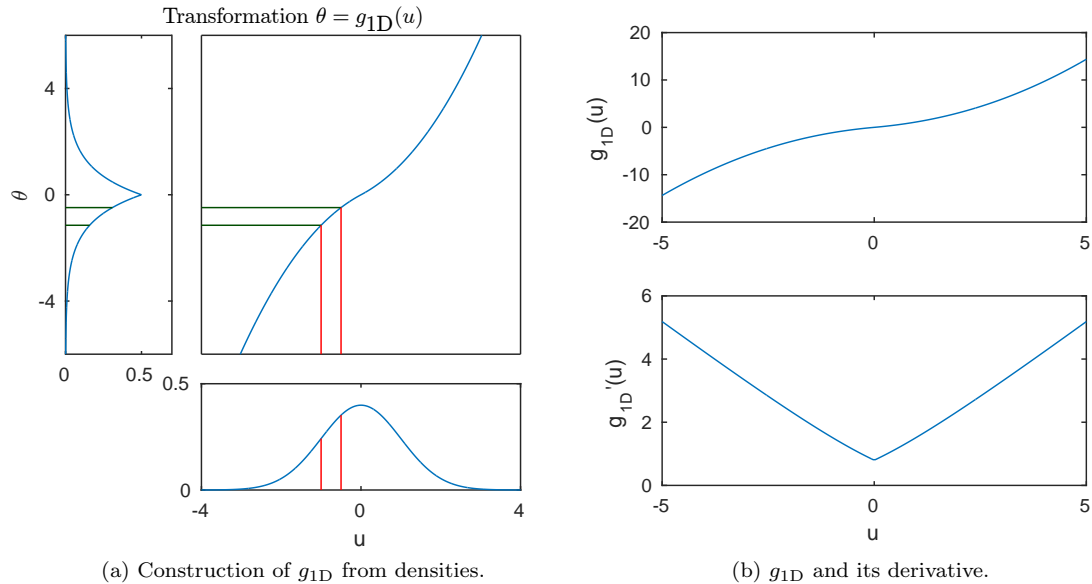

\centering
\subfloat[Construction of $\gd$ from densities.]{\pic{l1_map}{0.5}} 
\subfloat[$\gd$ and its derivative.]{\pic{l1_deriv}{0.49}}
\caption[Transformation of Gaussian to Laplace]{Left: transformation from the standard Gaussian to a Laplace distribution (with $\lambda = 1$). The probability mass between the two vertical lines is equal to that between the horizontal lines. Right: Mapping function $\gd$ and its derivative. The mapping function is continuously differentiable.}\label{fig:map}
\end{figure}

Now we can solve Bayesian inverse problems on $u$ and transform the posterior samples of $u$ to posterior samples of $\theta$ using the mapping function. The form of the \textit{transformed} posterior density, i.e., the posterior density of $u$, is given in the following lemma and proven in Appendix~\ref{sec:proofpost}.

\begin{lem}\label{lem:post}
  Let \eqref{eq:1Dpost} specify the posterior density of a Bayesian inference problem with parameter $\theta \in \R$. Given the variable transformation $\theta = \gd(u)$ defined in \eqref{eq:g1D}, the posterior density of $u$ has the form:
  \begin{equation}\label{eq:trpost}
    p(u\vert y) \propto \exp{\left( - \frac12 \left(\frac{f\circ \gd(u)-y}{\sigma_\text{obs}}\right)^2 - \frac12 u^2 \right)}.
  \end{equation}
\end{lem}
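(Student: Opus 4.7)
The plan is a one-line change of variables. Starting from the expression \eqref{eq:1Dpost} for $p(\theta \vert y)$, I would apply the standard density transformation rule under $\theta = \gd(u)$ to get
\[
p(u \vert y) \propto \exp\!\left(-\half\left(\frac{f\circ \gd(u)-y}{\sigma_{\text{obs}}}\right)^{\!2}\right) \cdot \exp(-\lambda|\gd(u)|) \cdot \gd'(u),
\]
where I have used that $\gd$ is strictly increasing, so its Jacobian is positive and no absolute value is needed. The data-misfit factor already matches the first exponential in \eqref{eq:trpost}, so the whole proof reduces to establishing
\[
\exp(-\lambda|\gd(u)|)\, \gd'(u) \; \propto \; \exp(-\tfrac{1}{2}u^2).
\]

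To prove this identity, I would differentiate the defining relation $\mathcal{L}(\gd(u)) = \varphi(u)$ in $u$. Since the Laplace density is $\mathcal{L}'(\theta) = (\lambda/2)\exp(-\lambda|\theta|)$ and $\varphi'(u) = (2\pi)^{-1/2}\exp(-u^2/2)$, the chain rule yields exactly the required relation, with multiplicative constant $2/(\lambda\sqrt{2\pi})$ that is absorbed into the proportionality symbol. Equivalently, I could invoke the observation already made in the text---that $u = \gd^{-1}(\theta)$ is standard Gaussian when $\theta$ is Laplace---which is, by the change-of-variables formula for densities, precisely the statement that the Laplace density times $\gd'(u)$ equals the standard Gaussian density.

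I do not expect any real obstacle: the map $\gd$ was engineered so that pulling the Laplace prior back through it produces a standard Gaussian, and everything beyond that is bookkeeping. The only mild subtlety is the kink of $|\gd(u)|$ at $u=0$: $\gd$ itself is $C^1$, and the identity only needs to hold almost everywhere (which suffices for densities), so the non-smoothness of the absolute value at the origin causes no issue. The brief computation itself would be deferred to the appendix, as the paper already indicates.
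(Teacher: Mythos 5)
Your proposal is correct and follows essentially the same route as the paper's proof: a change of variables applied to the posterior, with the prior term absorbed because the pullback of the Laplace prior under $\gd$ is the standard Gaussian. The only cosmetic difference is that you verify the identity $\exp(-\lambda|\gd(u)|)\,\gd'(u) \propto \exp(-u^2/2)$ explicitly by differentiating $\mathcal{L}\circ \gd = \varphi$, whereas the paper phrases the same cancellation abstractly via the inverse function theorem applied to the Jacobians of $\gd$ and $\gd^{-1}$.
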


After the transformation, the prior over the new variables simplifies to a standard Gaussian, and the forward model becomes more complex. In particular, the transformed forward model is the original forward model composed with the nonlinear mapping. The new posterior appears with a Gaussian prior and observational noise, and can be cast in the form \eqref{eq:lsform}. The resulting structure allows us to use RTO.

\subsubsection{Multiple parameters with an \texorpdfstring{$l_1$}{l1} prior} \label{sec:md}
Now we build on the previous section in order to construct a prior transformation for a multivariate $l_1$-type prior. Starting from an inverse problem of the form \eqref{eq:model}, we allow for multiple unknown parameters, $n\ge1$, and multiple observations, $m\ge1$. We impose the following $l_1$-type prior on $\theta$:
\[ p(\theta) \propto \exp{\Big( -\lambda \Vert  D \theta \Vert _1 \Big)} = \exp{\left(-\lambda \sum_{i=1}^n |(D\theta)_i | \right),} \]
where $D \in \R^{n \times n}$ is an invertible matrix and $(D\theta)_i$ denotes the $i$th element of vector $D\theta$. The posterior on $\theta$ is then
\begin{equation}\label{eq:l1post}
 p(\theta\vert y) \propto \exp{\left[- \half(f(\theta)-y)^\T  \Gamma_{\textrm{obs}}^{-1} (f(\theta)-y) \right]} \exp{\Big( - \lambda \Vert  D \theta \Vert _1 \Big).} \end{equation}

\begin{figure}[tbp]
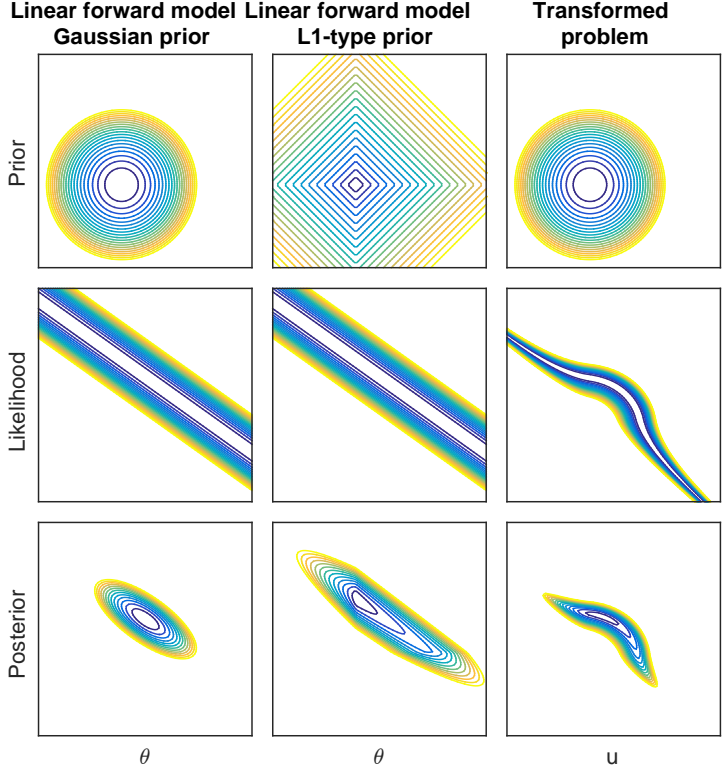

\centering
\pic{l1_2d}{0.75}
\vspace{-2 em}
\caption[Posteriors]{Densities in illustrative inverse problems with two parameters. The plots depict the log-prior density, log-likelihood function, and log-posterior density. The three cases shown are a Gaussian prior with a linear forward model (left), $l_1$-type prior with the same forward model (middle), and transformed $l_1$-type prior with transformed likelihood (right). The transformation changes the prior to a Gaussian and makes the likelihood more complex. The rightmost posterior is smooth and can be written in the form \eqref{eq:lsform}.}\label{fig:2dexp}
\end{figure}

Reference random variables that are \textit{a priori} i.i.d.\ Gaussian can be transformed to each Laplace-distributed element of $D\theta$ using the one-dimensional transformation $\gd$ defined in \eqref{eq:g1D}. Then, $D\theta=g(u)$, where $g: \R^n \rightarrow \R^n$ and
\[g(u) \defeq \big[\gd(u_1), \ldots, \gd(u_n)\big]^\T.\]
Thus, a prior transformation for the $l_1$-type prior is
\begin{equation} \label{eq:gmd}\theta = D^{-1}g(u), \end{equation}
resulting in the requirement that $D$ be invertible. Then, the Jacobian of the transformation is $D^{-1}\J{g}$, where $\J g : \R^n \rightarrow \R^{n\times n}$ is the Jacobian of $g$ given by
\begin{equation}
\label{Jacg} \J {g}(u) = \begin{bmatrix} \gd'(u_{1}) \\
                 &\gd'(u_{2}) &\\
                 & &\ddots &\\
                 & & &\gd'(u_{n})\end{bmatrix},
\end{equation}
and $\gd$ is defined in \eqref{eq:g1D}.

Using this transformation, one can derive the posterior density over $u$ by following the same steps as in the single variable case, with $D^{-1}g(u)$ in place of $\gd(u)$, to obtain
\[
p(u\vert y) \propto  \exp{\left(- \half\Big(f\left(D ^{-1} g(u)\right)-y\Big)^\T  \Gamma_{\textrm{obs}}^{-1} \Big(f\left(D^{-1}g(u)\right)-y\Big) -\half u^\T u \right)}.\]
The transformed posterior is in the form \eqref{eq:lsform} and is amenable to RTO sampling. Figure~\ref{fig:2dexp} illustrates the effect of the transformation on an inverse problem with two unknown parameters, $D = I_2$, and a linear forward model; comparing the second and third columns, we note that the transformed prior becomes a standard Gaussian, while the transformed likelihood becomes non-Gaussian. 

\subsection{RTO-MH with a general prior transformation}\label{sec:rtoprior}

\newcommand{\tT}{\widetilde{T}}
\newcommand{\tF}{\widetilde{F}}
\newcommand{\hu}{u_\text{prop}}
\newcommand{\hui}{u_\text{prop}^{(i)}}
\newcommand{\tf}{\tilde{f}}
\newcommand{\ty}{\tilde{y}}

Given an inverse problem in the form \eqref{eq:model} with a general non-Gaussian prior supported on $\R^n$, suppose that we can construct an invertible and continuously differentiable prior transformation $T: \R^n \rightarrow \R^n$ that couples a standard Gaussian random variable $u$ to our non-Gaussian random variable $\theta$. Both $\gd$ in \eqref{eq:g1D} and $D^{-1} g$ in \eqref{eq:gmd} are examples of such transformations $T$. Then the transformed posterior density is
\begin{align}
\label{eq:TransPost}
p(u\vert y) & \propto  \exp{\left(- \half\Big(f \circ T(u)-y\Big)^\T  \Gamma_{\textrm{obs}}^{-1} \Big(f\circ T(u)-y\Big) -\half u^\T u \right)} \\
& =  \exp{\left(- \half\big\Vert \tf(u)-\tilde{y}\big\Vert ^2  -\half \Vert u\Vert ^2 \right)}\nonumber\\
& \defeq \exp\left(-\frac12\Big\Vert  \tF(u)\Big\Vert ^2\right),\nonumber 
\end{align}
where $\tf(u)=\Gamma_{\textrm{obs}}^{-1/2} f\circ T(u)$ is the transformed forward model, $\tilde{y}=\Gamma_{\textrm{obs}}^{-1/2}y$ is the whitened data, and
$\tF(u)=\left[\begin{array}{c}u\\ \tf(u) - \ty \end{array}\right]$. We can use RTO to sample from the transformed posterior defined by \eqref{eq:TransPost}. 

To perform the optimization steps in RTO and to evaluate the proposal density of RTO, we need the Jacobian of $\tF$, which has the form
\begin{equation}
\label{JacFtilde}
\J {\tF}(u)=\left[\begin{array}{c}I\\ \J {\tf}(u)\end{array}\right].
\end{equation}
Here, $\J {\tf}(u)$ is the Jacobian of the transformed forward model $\tf$ and is given by
\begin{equation}
\label{Jacftilde}
\J{\tf}(u)=\Gamma_{\textrm{obs}}^{-1/2} \J{f} \left( T(u) \right)\J{T}(u),
\end{equation}
where $\J f  : \R^n \rightarrow \R^{m\times n}$ is the Jacobian of the original forward model $f$ and $\J T  : \R^n \rightarrow \R^{n\times n}$ is the Jacobian of the prior transformation $T$. The final algorithm, incorporating a prior transformation in RTO-MH, is summarized in Algorithm~\ref{alg:prior}. 

\begin{algorithm}[htbp]
\caption{RTO-MH with a Prior Transformation} \label{alg:prior}
\begin{algorithmic}[1]
\parState {Determine the prior mapping function $T : \R^n \rightarrow \R^n$ such that $u = T^{-1} (\theta)$ has a standard Gaussian distribution}
\State Find the mode $\overline{u} \in \R^n$ of transformed posterior density $p(u\vert y)$ defined by \eqref{eq:TransPost}
\State Calculate $\overline{Q} \in \R^{n\times m+n}$, whose columns are an orthonormal basis for the column space of $\J {\tF}(\overline{u})$, as defined in \eqref{JacFtilde}--\eqref{Jacftilde}
\For {$i = 1,\ldots, n_{\textrm{samps}}$} in parallel
\State Draw a standard Gaussian sample $\xi^{(i)}\sim\N(0,I_{n})$
\parState {Compute RTO samples via $\hui=\argmin_{u}\big\Vert  \bar{Q}^\T \tF(u)-\xi^{(i)}\big\Vert ^2$ and weights\\ $w\big(\hui\big)=\big|\bar{Q}^\T \J {\tF}\big(\hui\big)\big|^{-1}\exp\left(-\frac12\big\Vert  \tF\big(\hui\big)\big\Vert ^2+\frac12\big\Vert \bar{Q}^\T \tF\big(\hui\big)\big\Vert ^2\right)$}
\EndFor
\For {$i = 1, \ldots, n_\textrm{samps}$} in series
\State Sample $v$ from a uniform distribution on [0,1]
\If {$v < \left. w\big(\hui\big)\right/ w(u^{(i-1)})$}
  \State $u^{(i)}$ = $\hui$
\Else
  \State $u^{(i)}$ = $u^{(i-1)}$
\EndIf
\EndFor
\For {$i = 1,\ldots, n_{\textrm{samps}}$} in parallel
\parState {Define $\theta^{(i)} = T(u^{(i)})$, the desired samples from $p(\theta\vert y)$}
\EndFor
\end{algorithmic}
\end{algorithm}

The computational cost of Algorithm~\ref{alg:prior} is dominated by that of Step 6, where repeated optimization problems are solved and the weights are calculated. Typically, within each optimization iteration, $\tf$ is evaluated once and $J_{\tf}$ is applied to multiple vectors; after optimization, the weight $w(\hui)$ must be evaluated, which requires an evaluation of  $J_{\tf}(\hui)$ and an $\mathcal{O}(n^3)$ computation of the log-determinant.

Under certain conditions on $\tF$, given in Assumption~\ref{assump} (substituting $\tF$ for $F$), the samples $\hu$ generated by Steps 1--7 of Algorithm~\ref{alg:prior} are i.i.d.\ draws from the following probability density: 
\begin{equation}
q(\hu) = (2 \pi)^{-\frac{n}{2}} \left \vert \overline{Q}^\T \J{\widetilde F}(\hu)\right \vert  \exp\left(-\frac12 \Big\Vert  \overline{Q}^\T \widetilde F(\hu) \Big\Vert ^2\right).\label{eq:uproposal}
\end{equation}

When the original forward model is linear, i.e., $f(\theta) = A \theta$, and the prior transformation in Section~\ref{sec:md} is applied, the transformed problem automatically satisfies these conditions. This result is stated in the following theorem and proven in Appendix~\ref{sec:assume}.

\begin{thm}\label{thm:linear}
Let \eqref{eq:l1post} specify the posterior density of a Bayesian inference problem with parameters $\theta \in \R^n$, and let the forward model in  \eqref{eq:l1post}  be linear, $f(\theta) = A \theta$. After the prior transformation \eqref{eq:gmd}, the RTO algorithm described by Steps 1--7 of Algorithm~\ref{alg:prior} generates proposal samples with probability density given in \eqref{eq:uproposal}.
\end{thm}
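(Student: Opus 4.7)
The plan is to verify that the transformed least-squares map $\widetilde{F}$ satisfies the hypotheses of Assumption~\ref{assump} under which the RTO construction produces proposals with density $\eqref{eq:uproposal}$: namely that $\widetilde{F}\in C^1$, that $\overline{Q}^\T\J{\widetilde{F}}(u)$ is nonsingular for every $u\in\R^n$, and that the map $u\mapsto \overline{Q}^\T\widetilde{F}(u)$ is a $C^1$ bijection of $\R^n$ onto itself. Once these are in hand, the density $\eqref{eq:uproposal}$ arises as the pushforward of $\xi\sim\N(0,I_n)$ under the inverse of this bijection, with the usual Jacobian factor $|\overline{Q}^\T\J{\widetilde{F}}(u)|$.

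For the setup, write $B\defeq \Gamma_{\mathrm{obs}}^{-1/2}AD^{-1}$ and $\overline{J}\defeq\J{g}(\overline{u})$. Linearity of $f$ gives $\widetilde{F}(u)=\left[\begin{array}{c}u\\ Bg(u)-\tilde y\end{array}\right]$ and $\J{\widetilde{F}}(u)=\left[\begin{array}{c}I_n\\ B\J{g}(u)\end{array}\right]$, where $\J{g}(u)$ is the positive diagonal matrix of $\eqref{Jacg}$. Smoothness of $\widetilde{F}$ is immediate from the $C^1$ property of $\gd$ shown in Section~\ref{sec:1d}. Because the top $n\times n$ block of $\J{\widetilde{F}}(\overline{u})$ is $I_n$, the thin QR factorization $\J{\widetilde{F}}(\overline{u})=\overline{Q}\,\overline{R}$ exists; partitioning $\overline{Q}=\left[\begin{array}{c}\overline{Q}_1\\ \overline{Q}_2\end{array}\right]$ yields $\overline{Q}_1=\overline{R}^{-1}$ and $\overline{Q}_2=B\overline{J}\,\overline{R}^{-1}$, from which a direct substitution gives
\[
\overline{Q}^\T\J{\widetilde{F}}(u)=\overline{R}^{-\T}\bigl(I+\overline{J}B^\T B\,\J{g}(u)\bigr).
\]
Although $M(u)\defeq\overline{J}B^\T B\,\J{g}(u)$ is not symmetric, a similarity transform by $\J{g}(u)^{-1/2}\overline{J}^{1/2}$ (whose factors are diagonal and hence commute) conjugates $M(u)$ into the symmetric positive semidefinite matrix $\overline{J}^{1/2}\J{g}(u)^{1/2}B^\T B\,\J{g}(u)^{1/2}\overline{J}^{1/2}$. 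Thus $M(u)$ has nonnegative real spectrum, $I+M(u)$ has spectrum contained in $[1,\infty)$, and $\overline{Q}^\T\J{\widetilde{F}}(u)$ is nonsingular for every $u$.

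What remains is the global bijection, and this is the main obstacle. Writing $\overline{Q}^\T\widetilde{F}(u)=\overline{R}^{-\T}(h(u)-\overline{J}B^\T\tilde y)$ with $h(u)\defeq u+\overline{J}B^\T Bg(u)$, bijectivity of $u\mapsto\overline{Q}^\T\widetilde{F}(u)$ reduces to showing $h\colon\R^n\to\R^n$ is a diffeomorphism. I would invoke the Hadamard global inverse theorem: $h\in C^1$ and $\J{h}(u)=I+M(u)$ is invertible everywhere, so it is enough to establish properness, $\Vert h(u)\Vert\to\infty$ as $\Vert u\Vert\to\infty$. The essential ingredient here is that $\gd$ is an odd, strictly increasing function with superlinear growth $|\gd(t)|\sim t^2/(2\lambda)$ as $|t|\to\infty$; combined with a decomposition of $u$ along $\mathrm{row}(B)$ and $\ker(B)$, this componentwise sign alignment of $g(u)$ with $u$ rules out asymptotic cancellation between the linear term $u$ and the superlinear term $\overline{J}B^\T Bg(u)$, and yields the coercivity needed for Hadamard.

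With bijectivity established, the change of variables $\xi=\overline{Q}^\T\widetilde{F}(u)$ pushes the $\N(0,I_n)$ density on $\xi$ to $q(u)=(2\pi)^{-n/2}|\overline{Q}^\T\J{\widetilde{F}}(u)|\exp(-\tfrac12\Vert\overline{Q}^\T\widetilde{F}(u)\Vert^2)$, which is exactly $\eqref{eq:uproposal}$. Since step~6 of Algorithm~\ref{alg:prior} computes $\hui$ as $\argmin_u\Vert\overline{Q}^\T\widetilde{F}(u)-\xi^{(i)}\Vert^2$, and the pointwise invertibility of $\overline{Q}^\T\J{\widetilde{F}}$ together with the global bijectivity of $\overline{Q}^\T\widetilde{F}$ makes this argmin the unique $u$ satisfying $\overline{Q}^\T\widetilde{F}(u)=\xi^{(i)}$, the generated proposal samples have the claimed density.
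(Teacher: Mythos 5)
Your verification of the paper's Assumption~\ref{assump} is correct and essentially the same as the paper's own: smoothness of $\widetilde{F}$, full rank of $\J{\widetilde{F}}$ from the identity block, and invertibility of $\overline{Q}^\T\J{\widetilde{F}}(u)$ for all $u$. Your explicit computation $\overline{Q}^\T\J{\widetilde{F}}(u)=\overline{R}^{-\T}\big(I+\J{g}(\overline{u})B^\T B\,\J{g}(u)\big)$ followed by the diagonal similarity argument is equivalent to the paper's route, which writes $\J{\widetilde F}(\overline u)^\T\J{\widetilde F}(u)=\J{g}(\overline u)\big(\J{g}(\overline u)^{-1}\J{g}(u)^{-1}+B^\T B\big)\J{g}(u)$ and notes the middle factor is symmetric positive definite, then uses $\overline{Q}=\J{\widetilde F}(\overline u)\overline{R}^{-1}$. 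Up to that point you have reproduced the paper's proof: the paper stops there and invokes the quoted result from the RTO literature (Theorem~\ref{thm:rtoprop}), which states that conditions \ref{assume:1}--\ref{assume:4} alone yield the proposal density \eqref{eq:uproposal}.

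Where your argument has a genuine gap is precisely in the step you call the main obstacle. Having chosen to re-derive the proposal density from scratch, you need global bijectivity of $u\mapsto\overline{Q}^\T\widetilde{F}(u)$, i.e.\ properness of $h(u)=u+\J{g}(\overline u)B^\T B\,g(u)$, and this is asserted rather than proven: ``sign alignment rules out asymptotic cancellation'' is a plan, not an estimate. Note also a technical wrinkle your sketch glosses over: because of the diagonal weight $\J{g}(\overline u)$, the nonlinear term lies in $\J{g}(\overline u)\cdot\mathrm{row}(B)$ rather than $\mathrm{row}(B)$, so the orthogonal decomposition along $\mathrm{row}(B)\oplus\ker(B)$ does not directly decouple the two terms; one must first rescale (e.g.\ consider $\J{g}(\overline u)^{-1}h$) so that projecting onto $\ker(B)$ bounds the kernel component of $u$, and then compare the cubic growth of $\langle g(u),u\rangle$ against the at-most-quadratic growth of $\Vert g(u)\Vert$ times a bounded factor to rule out bounded $h(u)$ along $\Vert u\Vert\to\infty$. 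The coercivity does hold and the argument can be completed along these lines, but as written it is missing; alternatively, you can avoid it entirely by doing what the paper does---stop after verifying Assumption~\ref{assump} and cite Theorem~\ref{thm:rtoprop}, which is the stated scope of Theorem~\ref{thm:linear}.
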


\noindent
The proof of the theorem simply checks that the transformed problem satisfies the assumptions under which the RTO proposal density holds. For nonlinear forward models $f$, we leave these conditions as an assumption.

\subsection{RTO-MH with an approximate prior transformation} \label{alg:approx}

\newcommand{\hT}{\widehat{T}}
\newcommand{\hF}{\widehat{F}}
\newcommand{\hq}{\hat{q}}
\newcommand{\hp}{\breve{p}}
\newcommand{\hf}{\hat{f}}

The previous section addressed cases where an \textit{exact} prior transformation $T$ is known---i.e., where, if  $\theta$ is distributed according to the prior, then $T^{-1}(\theta)$ has a standard Gaussian distribution. In some cases, determining such an exact transformation might not be feasible. Nonetheless we can still use approximate transformations---that is, transformations which only approximately ``Gaussianize'' the prior---to construct an RTO-MH algorithm. 

Consider a transformation $\hT: \mathbb{R}^n \to \mathbb{R}^n$ that couples a reference random variable $u$ to our prior-distributed random variable $\theta$. But now suppose that the distribution of the reference $u = \hT^{-1}(\theta)$ is only \textit{approximately} Gaussian. (To be clear, the expressions below will not require any Gaussian assumption on $u$; the degree to which $u$ departs from a standard Gaussian will affect the efficiency, not the correctness, of the following Metropolis-Hastings scheme.) These transformations can often be constructed numerically. For example, \cite{tarekTM,mattMCMC,mapsHandbookChapter} describe how to construct parameterized maps from samples or unnormalized density evaluations of any atomless distribution. We can modify our method to work for approximate prior transformations such as these.

As with the exact map, let $\hT$ be invertible and continuously differentiable. We can apply the usual RTO procedure to obtain proposal samples $\hui$ by solving
\[ \hui = \argmin_u \left \Vert \overline{Q}^\T \begin{bmatrix} u \\ \Gamma^{-\frac12}_{\text{obs}} \big( f \circ \hT (u) - y\big)\end{bmatrix}  - \xi^{(i)} \right \Vert^2 \]
for Gaussian samples $\xi^{(i)}$. The proposed samples (in the reference space) will be distributed according to the density
\[\hq(\hu) = (2 \pi)^{-\frac{n}{2}} \left \vert\overline{Q}^\T \J{{\hF}}(\hu)\right \vert \exp\left(-\frac12 \Big \Vert \overline{Q}^\T \hF(\hu) \Big \Vert^2\right) \]
 where
\[\hF(\hu) = \begin{bmatrix} \hu \\ \Gamma^{-\frac12}_{\text{obs}} \big( f \circ \hT (\hu) - y\big)\end{bmatrix}. \]
 In order to obtain samples from the posterior, the RTO-MH algorithm must be modified to incorporate the density of the pullback of the true posterior under the map $\hT$, which has the form
\begin{equation}
\label{e:pullback} 
p(u \vert y) \propto \exp\left(- \frac12 \left \Vert \hf (u)- \tilde{y} \right \Vert^2 \right) \left \vert J_{\hT}(u) \right \vert \,  p_{\theta}  ( \hT(u) ) , 
\end{equation}
where $p_\theta(\cdot )$ is the prior density on $\theta$, $\hf(u) = \Gamma_\text{obs}^{-1} f\circ \hT (u)$, and $\vert J_{\hT}( \cdot) \vert $ is the Jacobian determinant of $\hT$. Contrast \eqref{e:pullback} with \eqref{eq:TransPost}; the key difference is that the standard Gaussian prior on $u$ has been replaced with the pullback of $p_\theta$ under the map $\hT$. If the prior transformation $\hT$ were exact, these two expressions would be equivalent.
This process gives an altered formula for the weights in Step 6 of Algorithm \ref{alg:prior}:
\[ w(\hui) = \left \vert\overline{Q}^\T \J {\hF} (\hui )\right \vert^{-1}\exp\left(-  \frac12 \left \Vert\hf (\hui )- \tilde{y}  \right \Vert^2 +\frac12\left \Vert\overline{Q}^\T \hF (\hui )\right \Vert^2\right)   \left \vert J_{\hT} (\hui ) \right \vert \, \, p_{\theta} ( \hT (\hui ) ).  \] 
The rest of the algorithm remains unchanged. In essence, the error in the approximate prior transformation is handled by appropriately altering the Metropolis-Hastings acceptance ratio.

\resetcounters

\section{Numerical examples} \label{sec:ex}
We apply RTO-MH with prior transformations to three numerical examples, labeled A, B, and C, all with $l_1$-type priors. Examples A and B are (spatially) 1-D deconvolution problems with linear forward models, while Example C is a (spatially) 2-D inverse problem with a nonlinear forward model. In Example A, we use a TV prior and perform a simple comparison of the efficiency of our method with that of other MCMC samplers, including the Gibbs scheme proposed in \cite{felix} for linear inverse problems with $l_1$-type priors. In Example B, we use a Besov \B space prior and examine the effects of parameter dimension $n$ and hyperparameter $\lambda$ on the performance of RTO. Finally, in Example C, we infer the coefficient field of a linear elliptic PDE; in this case, we use the 2-D Besov \B space prior. This example is meant to test RTO on a more difficult inverse problem, involving a nonlinear forward model and a parameter field in two spatial dimensions.

\subsection{One-dimensional deconvolution problems}
Examples A and B involve the deconvolution of a 1-D signal. We discretize a true signal, $\theta_\text{true}(x)$, defined on the domain $x \in [0,1]$, using $n$ grid points. The true signal is convolved with the function 
\begin{equation}\label{eq:k}
k(x) = \left\{\begin{matrix} 1 &\text{  if  } -\frac{1}{64}<x<\frac{1}{64}\\
 0 &\text{  otherwise,  }\end{matrix}\right.
\end{equation}
and evaluated at $m = 30$ points to create measurements corresponding to integrals over interior segments of the domain. The data $y \in \R^m$ are generated by adding i.i.d.\ Gaussian noise with $\Gamma_\text{obs} = \sigma_\text{obs}^2I$.

\subsubsection{Example A: TV prior}
In this example, the true signal is the square pulse,
\[\theta_\text{true}(x) = \left\{\begin{matrix} 1 &\text{  if  } \frac{1}{3}<x<\frac{2}{3}\\
 0 &\text{  otherwise  }\end{matrix}\right.,\]
 which is also used in \cite{tvnotok,felix}. Figure~\ref{fig:tv_expost} depicts the true signal and the resulting data.
\begin{figure}[htbp]
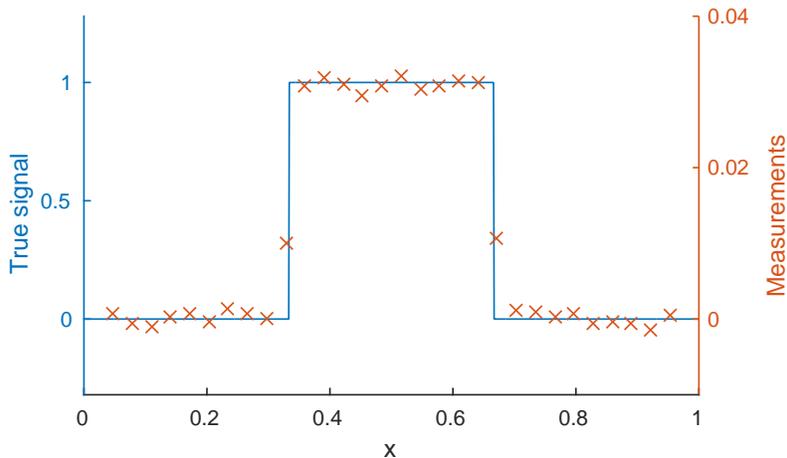

\centering
\pic{tv_true}{0.7}
\caption{The true signal and noisy measurements for example A.}\label{fig:tv_expost}
\end{figure}

We use a TV prior, i.e.,
$\pi(\theta) \propto \exp{\left( - \lambda \| D \theta \|_1 \right)},$
with $\theta \in \R^{n}$ , $n = 63$,
\[ D = \begin{bmatrix}
1 &  &  & 1  \\
-1 & 1 &  &  \\
 & \ddots & \ddots & \\
 &  & -1 & 1
\end{bmatrix}_{n\times n}, \]
$\sigma_\text{obs} = \epow{1}{-3}$, and $\lambda = 8$. The first row of $D$ imposes the condition that the sum of the boundary values is zero, making $D$ invertible, which is required for the prior transformation to be well-defined. 

We generate MCMC chains using three different algorithms: RTO-MH with a prior transformation,  MALA,  and the Gibbs scheme of \cite{felix}. To compare computational costs, we count the number of function evaluations used by each algorithm, with a Jacobian evaluation (used by MALA and RTO) counted as a single function evaluation for this linear problem. For each algorithm, we stopped the MCMC chain once the number of evaluations reached $\epow{4}{6}$. 
The resulting MCMC chains are shown in Figure~\ref{fig:tv_chain}. For RTO, we used the default settings of the nonlinear least-squares solver \texttt{lsqnonlin} in MATLAB to perform all the optimizations. Our first attempt at MALA used the adaptive (AMALA) scheme of \cite{amala}. The resulting chain did not reach stationarity after $\epow{4}{6}$ evaluations, as seen in Figure~\ref{fig:tv_chain}. Note that the vertical axis of the figure showing the AMALA chain is different from the others; the chain has not even located the region of high posterior probability. Instead, to obtain a convergent solution using MALA, we switched to a preconditioned MALA scheme, where the preconditioner was prescribed to be the posterior covariance matrix estimated from a converged MCMC chain generated by another algorithm (e.g., Gibbs sampling). Since finding this covariance requires a full exploration of the posterior, this scheme is not something that could be applied in practice; rather, it represents the ``ideal'' or endpoint of any AMALA scheme. But we show these MALA results here simply for comparative purposes. As seen in Figures~\ref{fig:tv_mean} and \ref{fig:tv_cov}, the posterior mean (also called the conditional mean (CM)) and posterior covariance from all three MCMC algorithms agree as we increase the maximum number of evaluations. This provides numerical evidence that RTO-MH with a prior transformation generates samples from the correct distribution. 

\begin{figure}[htbp]
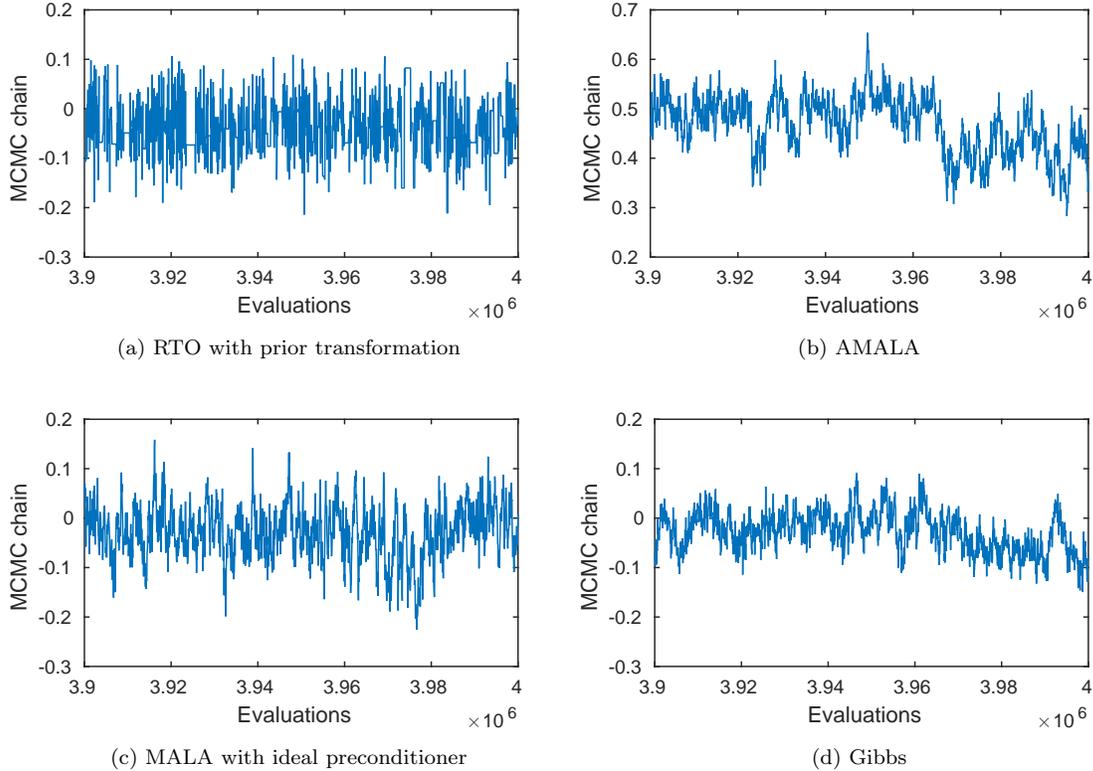

\centering
\subfloat[RTO with prior transformation]{\pic{tv_chain_rto}{0.4975}}
\subfloat[AMALA]{\pic{tv_chain_amala}{0.4975}}\\
\subfloat[MALA with ideal preconditioner]{\pic{tv_chain_mala}{0.4975}}
\subfloat[Gibbs]{\pic{tv_chain_gibbs}{0.4975}}
\caption{Example A: MCMC chains from various methods. Index $7$ is plotted, which corresponds to the median ESS for RTO. The chain for AMALA is not yet stationary. The horizontal axis (number of function and Jacobian evaluations) reflects a common measure of computational cost for all methods.}\label{fig:tv_chain}
\end{figure}

\begin{figure}[htbp]
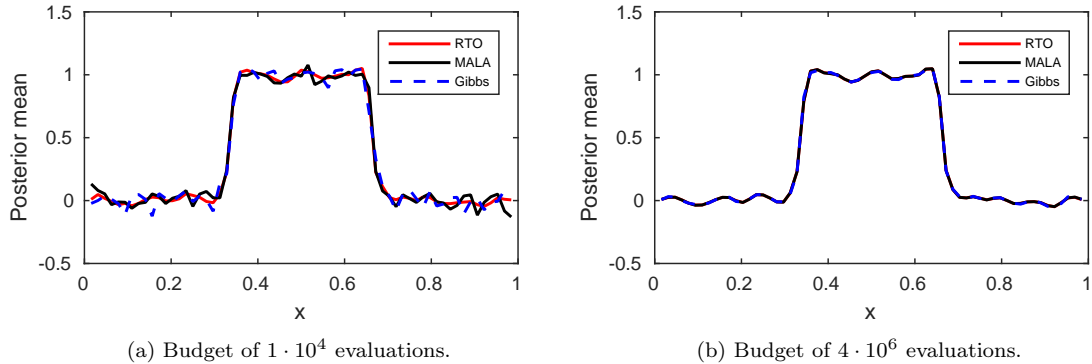

\centering
\subfloat[Budget of $\epow{1}{4}$ evaluations.]{\pic{tv_mean_short}{0.4975}}
\subfloat[Budget of $\epow{4}{6}$ evaluations.]{\pic{tv_mean_long}{0.4975}}
\caption{Example A: Sample estimates of the posterior mean, computed with transformed RTO (red), MALA (black), and Gibbs (blue).}\label{fig:tv_mean}
\end{figure}

\begin{figure}[htbp]
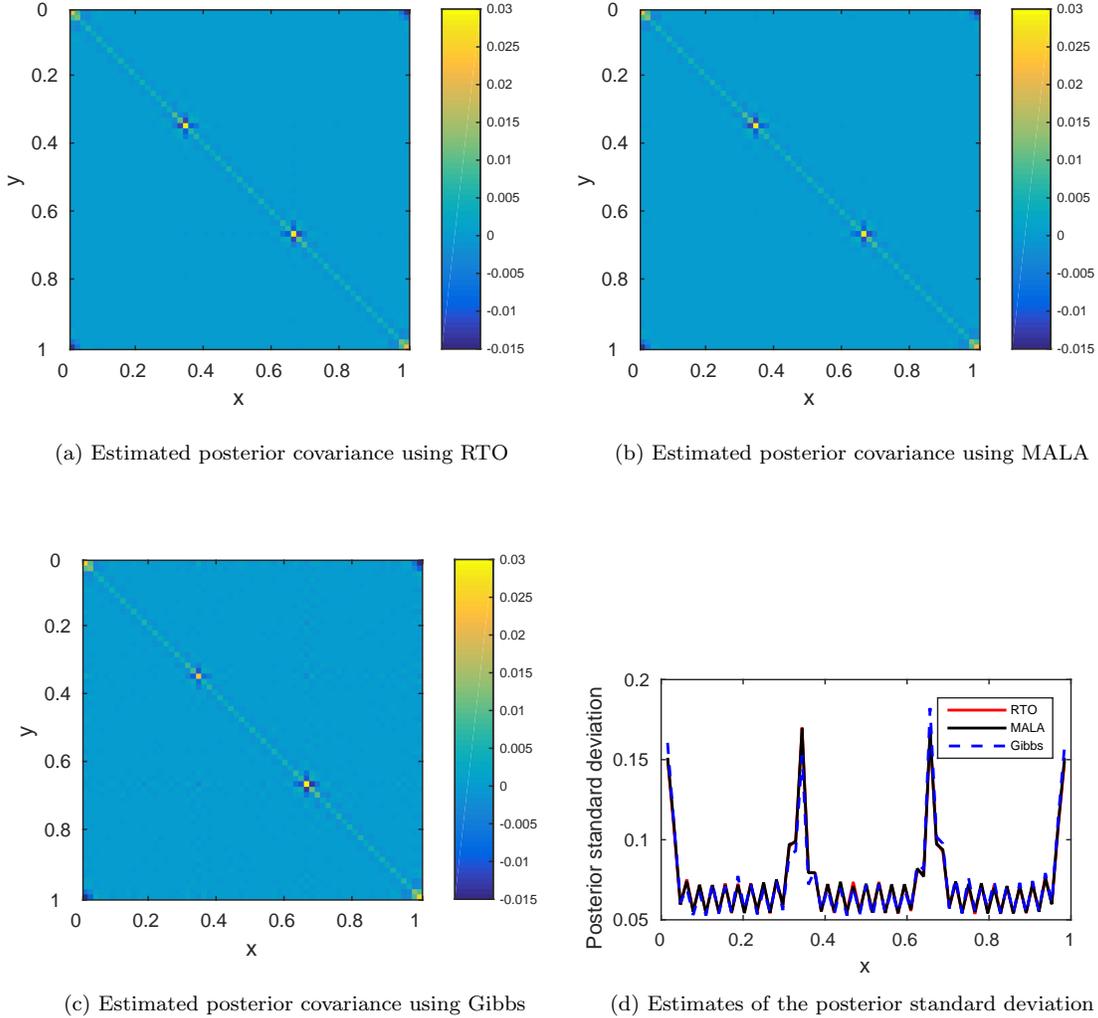

\centering
\subfloat[Estimated posterior covariance using RTO]{\pic{tv_cov_rto}{0.4975}}
\subfloat[Estimated posterior covariance using MALA]{\pic{tv_cov_mala}{0.4975}}\\
\subfloat[Estimated posterior covariance using Gibbs]{\pic{tv_cov_gibbs}{0.4975}}
\subfloat[Estimates of the posterior standard deviation]{\pic{tv_std}{0.475}}
\caption{Example A: Sample estimates of the posterior covariance (top row and bottom left) and pointwise posterior standard deviation (bottom right), using a budget of $\epow{4}{6}$ evaluations.}\label{fig:tv_cov}
\end{figure}

Next, we assess effective sample size (ESS) per function/Jacobian evaluation and per CPU-second, as two measures of computational efficiency. ESS is the number of effectively independent samples in a Markov chain, i.e., the number of 
samples in a standard Monte Carlo estimator that has the same variance as an estimator computed from the correlated samples of the MCMC chain. It can be interpreted as a measure of the quality of the MCMC samples, where larger values of ESS indicate better chain mixing \cite{mcmcinpractice}. An accurate way to calculate the ESS of an MCMC chain of a single parameter is found in \cite{wolff}; we do so for each component of our chains and report the minimum, median, and maximum (across components) ESS per evaluation and ESS per CPU-second in Table~\ref{tab:tv_perESS}. The RTO method has a higher ESS per evaluation than the other benchmark algorithms. 
However,  since the optimization and calculation of the weights in RTO involves additional computational overhead, MALA using the ``ideal'' preconditioner has a higher ESS per CPU-second than RTO-MH. As noted above, though, MALA with the ``ideal'' preconditioner is not a practically realizable algorithm. AMALA is a practical realization of preconditioned MALA, and the chain's poor mixing is reflected in low ESS per CPU-second values.
Overall, these results suggest that RTO-MH with a prior transformation is quite competitive for this test case, even without accounting for the fact that RTO can be run in parallel. 
\begin{table}[htbp]
\caption{Example A: ESS per evaluation or per CPU-second. Each Jacobian evaluation is considered to be equivalent in cost to one function evaluation. MALA (ideal) is preconditioned with the posterior covariance calculated from a converged chain of another method.} \label{tab:tv_perESS}
\centering
\begin{tabular}{l l l l l l l}
\toprule
\multirow{2}{*}{Method} & \multicolumn{3}{c}{ESS per evaluation} & \multicolumn{3}{c}{ESS per CPU-second}\\ 
\cmidrule(l{2pt}r{2pt}){2-4} \cmidrule(l{2pt}r{2pt}){5-7} 
 &Minimum &Median &Maximum &Minimum &Median &Maximum \\ 
 \midrule
RTO with transf. & $\mepow{2.48}{-3}$ & $\mepow{7.43}{-3}$ & $\mepow{8.72}{-3}$ & $\mepow{4.77}{-1}$ & $\mepow{1.43}{0}$ & $\mepow{1.67}{0}$ \\
AMALA & $\mepow{1.09}{-6}$ & $\mepow{1.21}{-6}$ & $\mepow{3.76}{-6}$ & $\mepow{3.19}{-4}$ & $\mepow{3.54}{-4}$ & $\mepow{1.10}{-3}$ \\
MALA (ideal) & $\mepow{1.08}{-3}$ & $\mepow{1.24}{-3}$ & $\mepow{1.48}{-3}$ & $\mepow{1.39}{1}$ & $\mepow{1.60}{1}$ & $\mepow{1.90}{1}$ \\
Gibbs               & $\mepow{9.60}{-6}$ & $\mepow{7.06}{-5}$ & $\mepow{1.26}{-4}$ & $\mepow{1.96}{-2}$ & $\mepow{1.44}{-1}$ & $\mepow{2.57}{-1}$ \\
\bottomrule
\end{tabular}
\end{table}

\begin{rem}
The CM estimate using a TV prior is in general not piecewise constant (i.e., blocky). In \cite{tvnotok}, it is proven that under refinement of parameter discretization, the CM estimate using a TV prior will become smooth.
\end{rem}

\subsubsection{Example B: Besov space prior}
This second example is also a deconvolution of a 1-D signal. Here, the true signal is taken to be
\[\theta_\text{true}(x) = \left\{\begin{matrix} 1 &\text{  if  } \left.{2}\right/{15}<x<\left.{7}\right/{15}\\
 \frac{1}{2} &\text{  if  }\left.{10}\right/{15}<x<\left.{13}\right/{15}\\
 0 &\text{  otherwise  }\end{matrix}\right..\]
 Figure~\ref{fig:bs_expost} shows the true signal and resulting data.

\begin{figure}[htbp]
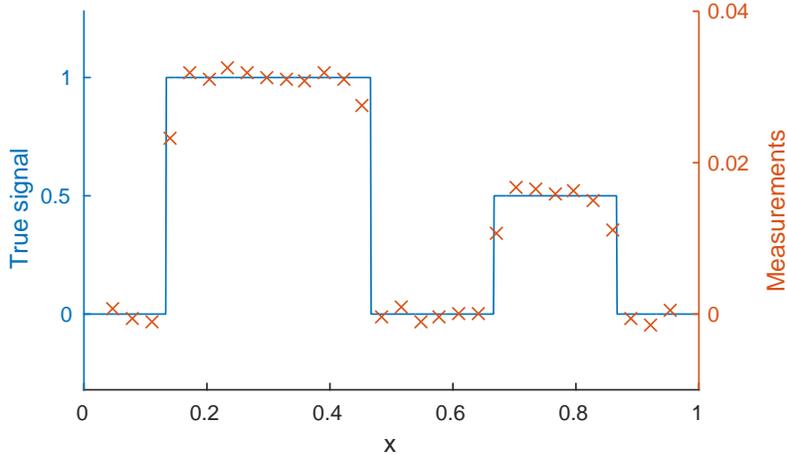

\centering
\pic{bs_true}{0.7}
\caption{Example B: True signal and noisy measurements.}\label{fig:bs_expost}
\end{figure}

This time, we use the Besov \B prior with $s=1$ and Haar wavelets, so that again
$\pi(\theta) \propto \exp{\left( - \lambda \| D \theta \|_1 \right)},$
where $\theta \in \R^{n}$, $D \in \R^{n\times n},$ and $\lambda \in \R$. In this case, the matrix $D$ contains scaled wavelet basis functions (see details in Appendix~\ref{sec:besov}), and $n$ must be a power of $2$. We set the observational noise to be $\sigma_\text{obs} = \epow{1}{-3}$.

RTO with a prior transformation is used to sample from the posterior distributions. We perform two studies: first by fixing the hyperparameter to $\lambda = 32$ and scanning through parameter dimensions $n \in \{32, 64, 128, 258, 512\}$; and second, by fixing $n = 64$ and scanning through hyperparameter values $\lambda \in \{\frac12, 1, 2, 4, 8, 16, 32, 64, 128\}$. We use chain lengths of $\epow{1}{4}$, and tabulate the total ESS and the number of function and Jacobian evaluations. 
When we increase the dimension $n$, the posterior mean converges, as in Figure~\ref{fig:bs_mean}. This is expected due to the discretization-invariant nature of the Besov \B prior \cite{discretization-invariant,besov}. 
Next, as reported in Table~\ref{tab:bs_ESS}, with each doubling of the dimension $n$, the ESS does not really decrease and the number of function evaluations increases only slightly. This is an important and encouraging result, as it is evidence of discretization invariance not only in the problem formulation, but in the performance of the transformed RTO-MH \textit{sampling scheme}.
Finally, as we increase the hyperparameter $\lambda$, the CM becomes smoother and the posterior standard deviation decreases, as in shown Figure~\ref{fig:bs_l_mean}.  The sampling efficiency of our algorithm also deteriorates with increasing $\lambda$, as shown in Table~\ref{tab:bs_l_ESS}. Overall, the results from these parameter studies indicate that RTO-MH with a prior transformation is effective even when the parameter dimension $n$ is in the hundreds.

\begin{rem}
In Figure~\ref{fig:bs_mean}, the posterior standard deviation does not converge as the discretization is refined (i.e., as $n$ increases). This behavior is not unexpected, as the prior standard deviation also does not converge under mesh refinement. In particular, the \B Besov space prior with Haar wavelets has finite pointwise variance only when $s>1$, and not when $s=1$. One can prove this property by summing the variance contributions from each level of wavelets in the Besov prior, as shown in Appendix~\ref{sec:priorvar}.
\end{rem}

\begin{rem}
One possible reason for the decrease in sampling efficiency with higher $\lambda$ is that the posterior samples lie further in the tails of the Laplace prior. As a result, the transformation is more nonlinear in the sense that the Hessian involving $g_{1D}''$ is of higher magnitude.
\end{rem}

\begin{figure}[htbp]
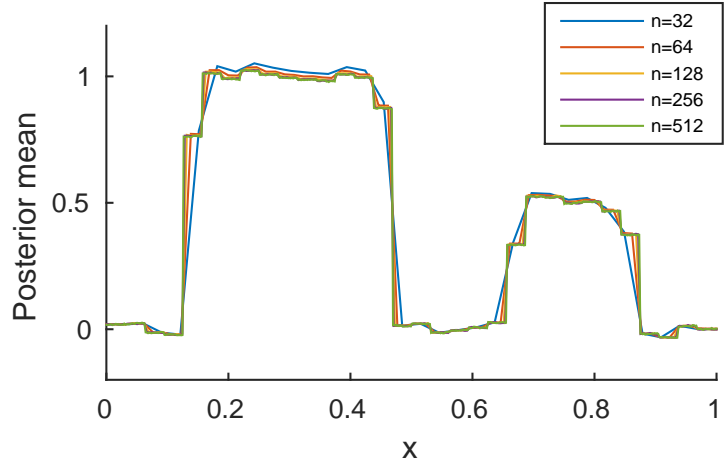
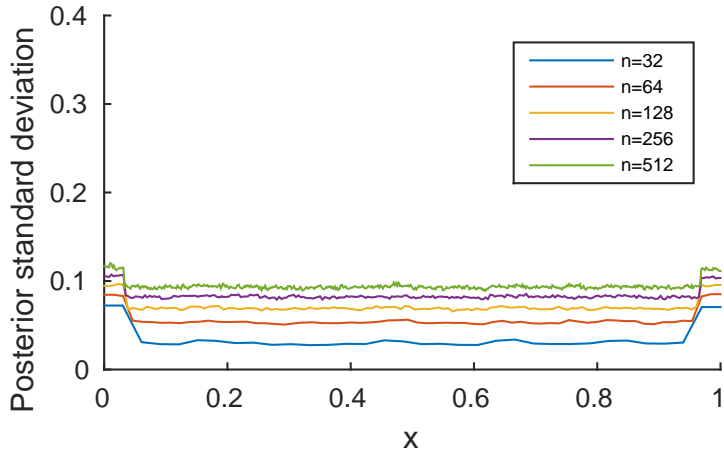

\centering
\subfloat[Posterior mean]{\pic{bs_cm}{0.7}}\\
\subfloat[Posterior standard deviation]{\pic{bs_std}{0.7}}
\caption{Example B: Posterior mean and standard deviation for different values of the parameter dimension $n$. Hyperparameter $\lambda$ is fixed to $32$.}\label{fig:bs_mean}
\end{figure}

\begin{table}[htbp]
\centering
\caption{Example B: ESS and computational cost of RTO for various parameter dimensions, given chains of length $\epow{1}{4}$.} \label{tab:bs_ESS}
\begin{tabular}{l l l l l l}
\toprule
\multirow{2}{*}{$n$} & \multicolumn{3}{c}{Total ESS} & \multicolumn{2}{c}{Total evaluations}\\ \cmidrule(r){2-4} \cmidrule{5-6}
 &Minimum &Median &Maximum &Function &Jacobian\\ 
 \midrule
  $32$  & $\mepow{2.68}{3}$   & $\mepow{3.86}{3}$   & $\mepow{4.61}{3}$  & $\mepow{4.26}{5}$ & $\mepow{4.26}{5}$\\
  $64$  & $\mepow{2.63}{3}$   & $\mepow{3.65}{3}$   & $\mepow{4.44}{3}$  & $\mepow{4.55}{5}$ & $\mepow{4.55}{5}$\\
  $128$ & $\mepow{2.10}{3}$   & $\mepow{3.53}{3}$   & $\mepow{5.07}{3}$  & $\mepow{4.59}{5}$ & $\mepow{4.59}{5}$\\
  $256$ & $\mepow{2.89}{3}$   & $\mepow{3.69}{3}$   & $\mepow{4.43}{3}$  & $\mepow{4.61}{5}$ & $\mepow{4.61}{5}$\\
  $512$ & $\mepow{2.06}{3}$   & $\mepow{3.65}{3}$   & $\mepow{4.41}{3}$  & $\mepow{4.65}{5}$ & $\mepow{4.65}{5}$\\ 
\bottomrule
\end{tabular}
\end{table}

\begin{figure}[htbp]
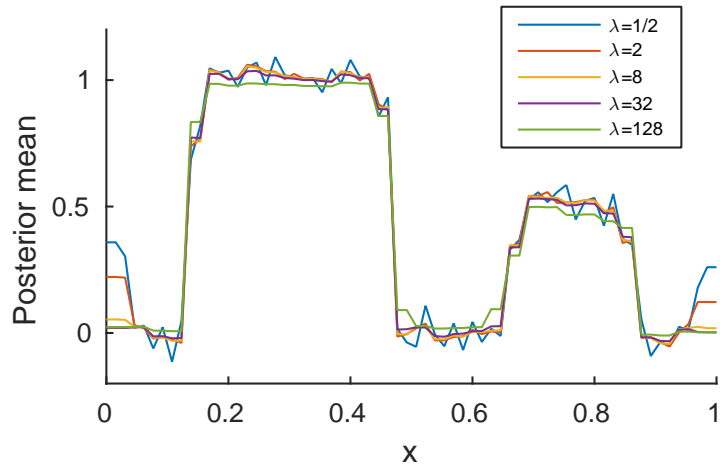
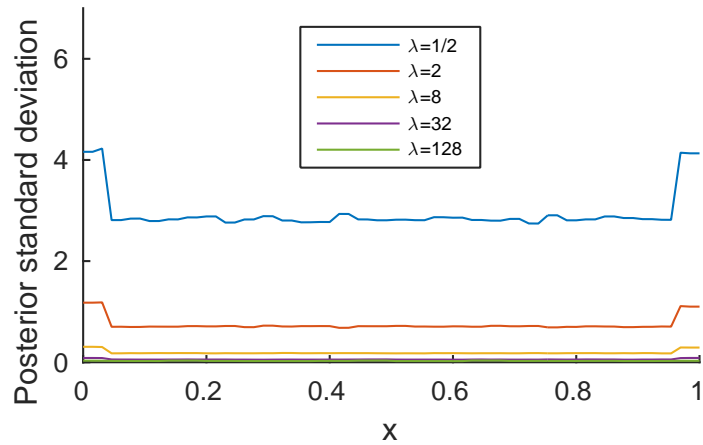

\centering
\subfloat[Posterior mean]{\pic{bs_lcm}{0.7}}\\
\subfloat[Posterior standard deviation]{\pic{bs_lstd}{0.7}}
\caption{Example B: Posterior mean and standard deviation for different values of the hyperparameter $\lambda$. Parameter dimension $n$ is fixed to $64$.}\label{fig:bs_l_mean}
\end{figure}

\begin{table}[htbp]
\centering
\caption{Example B: ESS and computational cost of RTO for varying prior hyperparameter values $\lambda$. Chains of length $\epow{1}{4}$ are used.} \label{tab:bs_l_ESS}
\begin{tabular}{l l l l l l}
\toprule
\multirow{2}{*}{$\lambda$} & \multicolumn{3}{c}{Total ESS} & \multicolumn{2}{c}{Total evaluations}\\ 
\cmidrule(r){2-4} \cmidrule{5-6}
 &Minimum &Median &Maximum &Function &Jacobian \\ 
 \midrule
   $0.5$ & $\mepow{5.00}{3}$ & $\mepow{5.83}{3}$ & $\mepow{7.88}{3}$ & $\mepow{5.47}{5}$ & $\mepow{5.47}{5}$\\
     $1$ & $\mepow{5.66}{3}$ & $\mepow{6.30}{3}$ & $\mepow{8.11}{3}$ & $\mepow{5.05}{5}$ & $\mepow{5.05}{5}$\\
     $2$ & $\mepow{5.74}{3}$ & $\mepow{6.71}{3}$ & $\mepow{8.23}{3}$ & $\mepow{4.73}{5}$ & $\mepow{4.73}{5}$\\
     $4$ & $\mepow{5.82}{3}$ & $\mepow{6.51}{3}$ & $\mepow{8.01}{3}$ & $\mepow{4.63}{5}$ & $\mepow{4.63}{5}$\\
     $8$ & $\mepow{4.68}{3}$ & $\mepow{5.69}{3}$ & $\mepow{6.96}{3}$ & $\mepow{4.69}{5}$ & $\mepow{4.69}{5}$\\
    $16$ & $\mepow{3.20}{3}$ & $\mepow{4.39}{3}$ & $\mepow{5.29}{3}$ & $\mepow{4.77}{5}$ & $\mepow{4.77}{5}$\\
    $32$ & $\mepow{2.63}{3}$ & $\mepow{3.65}{3}$ & $\mepow{4.44}{3}$ & $\mepow{4.55}{5}$ & $\mepow{4.55}{5}$\\
    $64$ & $\mepow{2.32}{3}$ & $\mepow{3.55}{3}$ & $\mepow{4.34}{3}$ & $\mepow{3.83}{5}$ & $\mepow{3.83}{5}$\\
   $128$ & $\mepow{1.08}{3}$ & $\mepow{2.19}{3}$ & $\mepow{2.79}{3}$ & $\mepow{3.02}{5}$ & $\mepow{3.02}{5}$\\
\bottomrule
\end{tabular}
\end{table}

\subsection{Two-dimensional elliptic PDE inverse problem}
Our next numerical example is an elliptic PDE coefficient inverse problem on a two-dimensional domain. The forward model maps the log-conductivity field of the Poisson equation to observations of the potential field,
\[ - \nabla \cdot \Big(\exp\big(\theta(x)\big) \nabla s(x) \Big) = h(x), \ \ x \in [0,1]^2 ,\]
where $\theta$ is the log-conductivity, $s$ is the potential, and $h$ is the forcing function. Neumann boundary conditions
\[ \exp\big(\theta(x)\big) \nabla s(x) \cdot \vec n(x) = 0\]
are imposed, where $\vec n(x)$ is the normal vector at the boundary. To complete the system of equations, the average potential on the boundary is set to zero.

This PDE is solved using finite elements. The domain is partitioned into a $\sqrt n\times \sqrt n$ uniform grid of square elements, and we use linear shape functions in both directions. The parameters $\theta \in \R^n$ to be inferred are the nodal values of $\theta(x)$. Independent Gaussian noise with standard deviation $\sigma_\text{obs} = \epow{2}{-3}$ is added to the potential field $s$ to give the observational data $y$. 

The forcing function $h$ is a linear combination of thirteen Gaussian bumps: nine with weight~$1$ centered at the points $(a,b)$, $a,b\in \{0.05, 0.5, 0.95\}$, and four with weight~$-9/4$ centered at the points $(a,b)$, $a,b\in\{0.25, 0.75\}$. The true parameter field $\theta_{\text{true}}$, forcing function, and resulting noisy measurements are shown in Figure~\ref{fig:2d_expost}. A similar problem setup is found in \cite{dili}.

\begin{figure}[htbp]
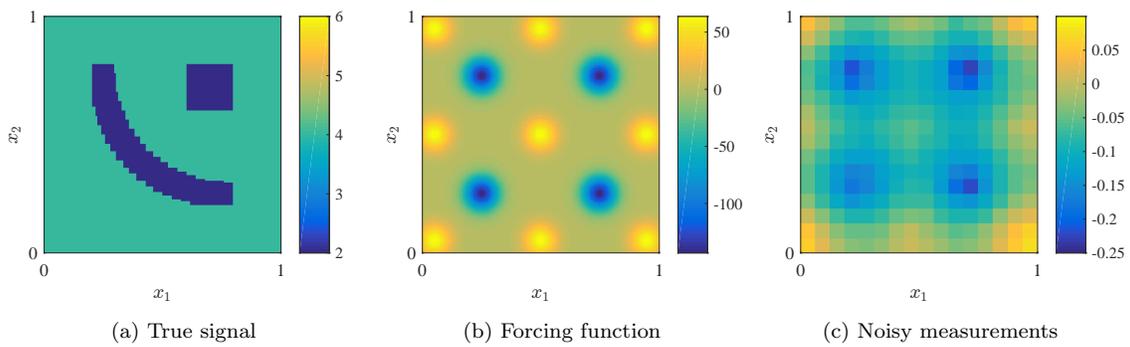

\centering
\subfloat[True signal]{\pic{2d_true}{0.33}}
\subfloat[Forcing function]{\pic{2d_forcing}{0.33}}
\subfloat[Noisy measurements]{\pic{2d_obs}{0.33}}
\caption{Example C: True signal, forcing function, and noisy measurements.}\label{fig:2d_expost}
\end{figure}

\subsubsection{Example C: Besov space prior in 2-D}

To complete the setup of the Bayesian inverse problem, we impose a 2-D Besov \B prior, with a tensorized Haar wavelet basis, on $\theta$. This $l_1$-type prior is also written in the form \eqref{eq:l1prior}. The columns of matrix $D$ are Kronecker products of the columns of the matrix from the 1-D Besov \B space prior. The hyperparameter value is $\lambda = 32$ and the parameter dimension is set to $n=256$, which gives rise to a $16\times16$ grid. The observational data are generated using a finer $128 \times 128$ grid.

We ran RTO-MH with a prior transformation and generated an MCMC chain of length $\epow{2}{5}$. The computation used $\epow{9.3}{6}$ function evaluations and $\epow{9.3}{6}$ Jacobian evaluations to produce an ESS of $\epow{4.5}{2}$. The posterior mean, estimated from the MCMC samples, appears similar to $\theta_\text{true}$ as shown in Figure~\ref{fig:2d_mean}. We also estimate the posterior standard deviation, shown in Figure~\ref{fig:2d_mean}; lower uncertainty regions seem to coincide with smaller log-conductivities. It is also instructive to look at independent samples from the posterior, shown in Figure~\ref{fig:2d_samp}(a). They exhibit small-scale roughness (consistent with the Besov prior) and sample-to-sample variability; however, the corresponding samples from the posterior predictive distribution closely match the data, as in Figure~\ref{fig:2d_samp}(b). The fact that wider variability among the parameter samples corresponds to much narrower variability among the predictions reflects the smoothing properties of the forward operator and the ill-posedness of the inverse problem. Collectively the posterior samples $\{ \theta^{(i)} \}$ characterize uncertainty in the solution of the inverse problem.

We note that the Gibbs sampler of \cite{felix} does not extend to nonlinear inverse problems such as this test case. 

\begin{figure}[htbp]
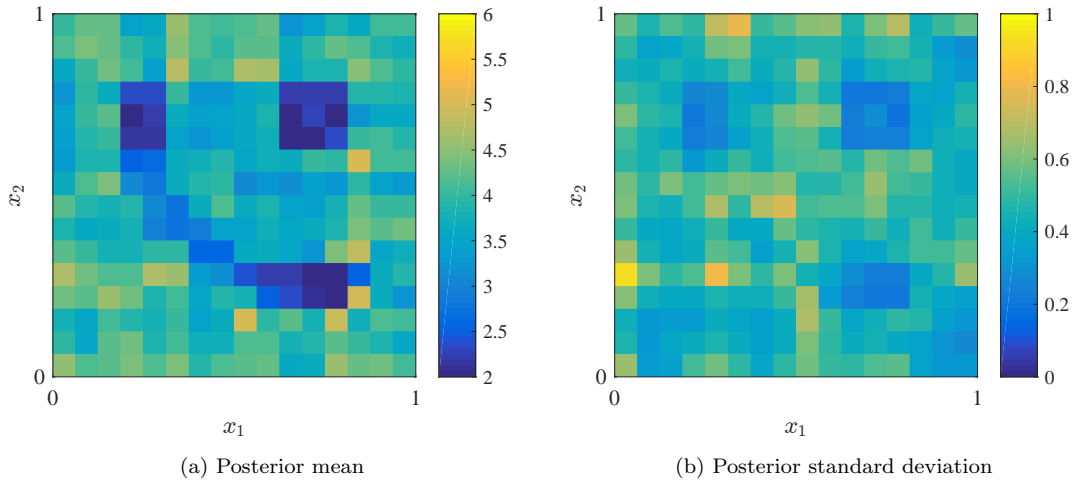

\centering
\subfloat[Posterior mean]{\pic{2d_cm}{0.49}}
\subfloat[Posterior standard deviation]{\pic{2d_std}{0.49}}
\caption{Example C: Posterior moments for the 2-D elliptic PDE problem.}\label{fig:2d_mean}
\end{figure}

\begin{figure}[htbp]
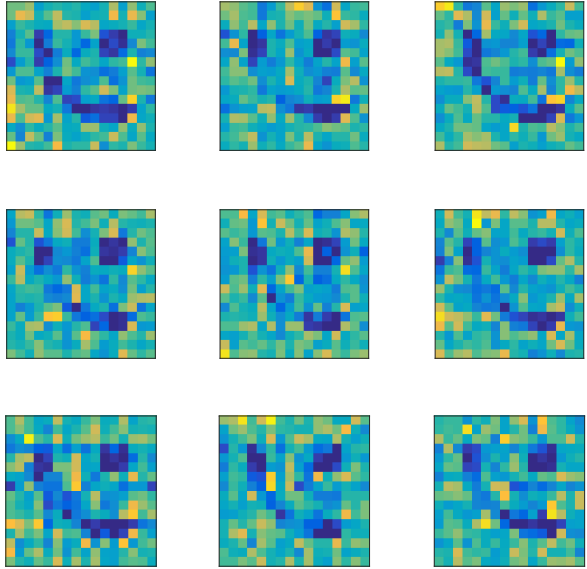
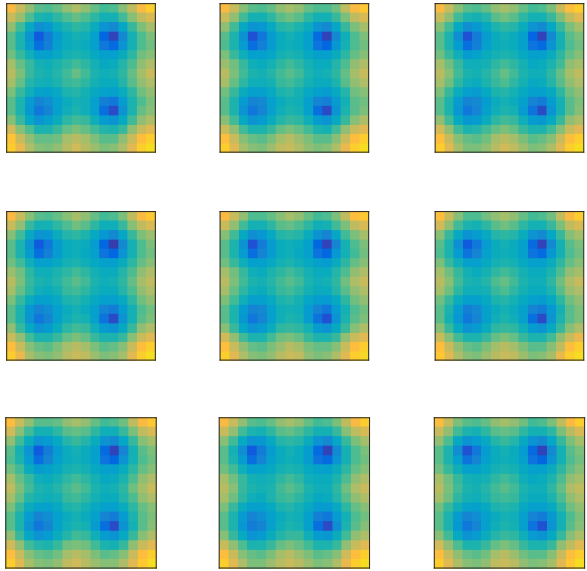

\centering
\subfloat[Posterior samples $\theta^{(i)}$]{\pic{2d_post}{0.6}}\\
\subfloat[Corresponding posterior predictive samples $f(\theta^{(i)})$]{\pic{2d_postpred}{0.6}}
\caption{Example C: Posterior samples and corresponding posterior predictive samples. The former have small-scale roughness and sample-to-sample variability, while the latter closely match the potential field measurements.}\label{fig:2d_samp}
\end{figure}

\resetcounters

\section{Concluding remarks}

We have extended RTO, an optimization-based sampling algorithm, to posterior distributions arising in Bayesian inverse problems with non-Gaussian priors. As a concrete example, we consider $l_1$-type priors such as TV and Besov \B priors. To transform the posterior into a form usable by RTO, we derive a deterministic map that transforms the prior to a standard Gaussian. We embed the RTO proposal into a Metropolis-Hastings algorithm to generate asymptotically exact samples from the transformed posterior, and then apply the transformation to obtain samples from the original posterior. Some assumptions are required for the probability density of the RTO proposal samples to be known and computable. We prove that these assumptions are satisfied for linear forward models and our transformation of $l_1$ priors. Numerical studies suggest that our method can be more efficient than standard MCMC algorithms, and that its sampling performance does not deteriorate as the parameter discretization is refined. We also successfully employ the algorithm for posterior sampling in a nonlinear inverse problem with a Besov \B prior in two spatial dimensions, suggesting that it is a promising and versatile computational approach for challenging problems.

\section*{Acknowledgments}
Z.\ Wang, A.\ Solonen, and Y.\ Marzouk acknowledge support from the eni-MIT Alliance research program. J.\ Bardsley was funded by the National Security Technologies, LLC, Site Directed Research and Development program.

\resetcounters

\appendix

\section{Proof of Lemma~\ref{lem:post}}\label{sec:proofpost}
We now derive the posterior density on $u$. In this appendix, we use more precise notation for clarity. Let $\pi_\Theta(w)$ be the prior density on $\theta$ evaluated at $\theta = w$, $\pi_U(w)$ be the prior density on $u$ evaluated at $u = w$, and so forth for the posterior densities. First, note that
\[ \pi_\Theta(\gd(u)) = \pi_U(u) \left|\frac{\partial}{\partial \theta} \gd^{-1}(\theta) \right|,\]
and thus
\begin{align*}
 \pi_{U|Y}(u|y) &= \pi_{\Theta|Y}(\gd(u)|y) \overbrace{\left|\frac{\partial}{\partial u} \gd(u)\right|}^{\left|\J {\gd}\right|} \\
 &\propto \exp{\left[-\half\left( \frac{f \circ \gd(u)-y}{\sigma_{\textrm{obs}}} \right)^2 \right]} \pi_\Theta(\gd(u)) \left|\frac{\partial}{\partial u} \gd(u)\right| \\
 &\propto \exp{\left[-\half\left( \frac{f \circ \gd(u)-y}{\sigma_{\textrm{obs}}} \right)^2 \right]} \pi_U(u) \underbrace{\left|\frac{\partial}{\partial \theta} \gd^{-1}(\theta) \right|}_{\left|\J {\gd}^{-1}\right|} \underbrace{\left|\frac{\partial}{\partial u} \gd(u) \right|}_{\left|\J {\gd}\right|} \\
&\propto \exp{\left[-\half\left( \frac{f \circ \gd(u)-y}{\sigma_{\textrm{obs}}} \right)^2 \right]} \pi_U(u)\\
&\propto \exp{\left[-\half\left( \frac{f \circ \gd(u)-y}{\sigma_{\textrm{obs}}} \right)^2 \right]} \exp{\left(-\half u^2 \right).}\\
\end{align*}
We note that in the third line, by the inverse function theorem, $\left|\frac{\partial}{\partial \theta} \gd^{-1}(\theta) \right|$ is the inverse of $\left|\frac{\partial}{\partial u} \gd(u)\right|$ and the two terms cancel.

\section{RTO proposal density and proof of Theorem~\ref{thm:linear}} \label{sec:assume}
First, we recall the assumptions under which the RTO proposal density in (\ref{eq:RTOproposal}) holds. Knowing the form of the proposal density is important because it allows us to ``correct'' the proposed samples and thus achieve exact sampling, for instance through the use of a Metropolis-Hastings scheme, or via importance sampling. The theorem that describes the required assumptions is found in \cite{rto} and restated below. 

\begin{assume}[Conditions for validity of the RTO proposal density]\label{assump}\\
\assumeitem \label{assume:1}
  $p(\theta|y) \propto \exp \left(-\half \Vert F(\theta)\Vert^2\right)$, where $\theta \in \R^n$.\\
\assumeitem \label{assume:2}
  $F : \R^n \rightarrow \R^{n+m}$ is a continuously differentiable function with Jacobian $\J{F}$.\\
\assumeitem \label{assume:3}
  $\J{F}(\theta) \in \R^{(n+m)\times n}$ has rank $n$ for every $\theta$ in the domain of $F$.\\
\assumeitem \label{assume:4}
  The matrix $\overline{Q}^\T \J{F}(\theta)$ is invertible for all $\theta$ in the domain of $F$, where 
  \[\J{F}(\overline{\theta}) = [\overline{Q},\widetilde{Q}]\left[\begin{array}{c}\overline{R}\\0\end{array}\right]\] 
  \makebox[2em][l]{~} is the QR factorization of $\J{F}(\overline{\theta})$, with $\overline{\theta}$ fixed.
\end{assume}

\begin{thm}[Proposal density for RTO \cite{rto}]\label{thm:rtoprop}
If Assumption~\ref{assump} holds, then the RTO algorithm described by Steps 1--7 of Algorithm~\ref{alg:rto} generates proposal samples distributed according to the probability density (\ref{eq:RTOproposal}).
\end{thm}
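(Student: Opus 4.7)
The plan is to verify that the transformed least-squares residual $\tF$ satisfies the four parts of Assumption~\ref{assump} when $f(\theta)=A\theta$, so that Theorem~\ref{thm:rtoprop} applies to $\tF$ and directly yields the proposal density \eqref{eq:uproposal}. With the linear forward model and the transformation \eqref{eq:gmd}, $\tF(u)$ stacks $u$ on top of $\Gamma_{\rm obs}^{-1/2}(AD^{-1}g(u)-y)$, so its Jacobian has the block form
\[
\J{\tF}(u) = \begin{bmatrix} I_n \\ \Gamma_{\rm obs}^{-1/2} A D^{-1} \J{g}(u) \end{bmatrix},
\]
where $\J{g}(u)$ is the diagonal matrix in \eqref{Jacg} whose entries $\gd'(u_i)=\varphi'(u_i)/(\lambda\varphi(-|u_i|))$ are strictly positive for every $u_i$.

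Parts (i) and (ii) are essentially bookkeeping. Part (i) is the derivation of the transformed posterior already carried out in \eqref{eq:TransPost}, specialized to $f=A\cdot$; part (ii) follows from the continuous differentiability of $\gd$ established in Section~\ref{sec:1d}. Part (iii) is immediate from the block form above: the top $n\times n$ block of $\J{\tF}(u)$ is $I_n$, so $\J{\tF}(u)$ has full column rank $n$ at every $u\in\R^n$.

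The main obstacle is part (iv), which demands invertibility of $\overline Q^\T \J{\tF}(u)$ at \emph{every} $u$, not just at the linearization point $\overline u$. The plan is to exploit the QR factorization $\J{\tF}(\overline u)=\overline Q\,\overline R$: since $\overline R$ is invertible by (iii), we have $\overline Q^\T = \overline R^{-\T}\,\J{\tF}(\overline u)^\T$, and substituting together with the identity block in each Jacobian gives
\[
\overline Q^\T \J{\tF}(u) = \overline R^{-\T}\Bigl(I_n + \J{g}(\overline u)\,M\,\J{g}(u)\Bigr),\qquad M \defeq D^{-\T} A^\T \Gamma_{\rm obs}^{-1} A D^{-1}\succeq 0.
\]
It therefore suffices to show that $I_n+\J{g}(\overline u)\,M\,\J{g}(u)$ is nonsingular. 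At this point I would invoke the standard fact that $AB$ and $BA$ share the same spectrum for square matrices of the same size to rewrite the product as $M\,\J{g}(u)\,\J{g}(\overline u)$, and then conjugate by $(\J{g}(u)\J{g}(\overline u))^{1/2}$ (well-defined because both diagonals are positive) to recognize a symmetric positive semidefinite matrix. Its eigenvalues are nonnegative, so all eigenvalues of $I_n+\J{g}(\overline u)\,M\,\J{g}(u)$ are at least $1$, and (iv) follows.

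With (i)--(iv) in hand, the conclusion is immediate by applying Theorem~\ref{thm:rtoprop} with $F$ replaced by $\tF$ and $\overline\theta$ by $\overline u$. The only genuinely nontrivial ingredient is the spectral argument for (iv); the remaining parts are a direct translation of the general RTO setup to the transformed linear problem.
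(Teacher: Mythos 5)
You have proved a different statement from the one asked. The statement here is the general RTO proposal-density result (Theorem~\ref{thm:rtoprop}): \emph{whenever} Assumption~\ref{assump} holds, Steps 1--7 of Algorithm~\ref{alg:rto} produce samples distributed according to \eqref{eq:RTOproposal}. Your write-up never argues this; it verifies Assumption~\ref{assump} for the transformed residual $\widetilde{F}$ arising from a linear forward model and the transformation \eqref{eq:gmd}, and then \emph{invokes} Theorem~\ref{thm:rtoprop} to conclude. That is precisely the content and structure of Theorem~\ref{thm:linear}, proved in Appendix~\ref{sec:assume}; relative to the present statement it is circular, since the theorem to be established is used as a black box. (The paper itself does not reprove Theorem~\ref{thm:rtoprop} either --- it restates it from \cite{rto} --- but a genuine proof must analyze the sampling mechanism: show that under conditions \textit{(i)}--\textit{(iv)} the least-squares problem \eqref{eq:optprob} attains a zero residual with a unique minimizer, so that the proposal point is the unique solution of $\overline{Q}^\T F(\theta)=\xi$ and $\theta\mapsto\overline{Q}^\T F(\theta)$ is an invertible, continuously differentiable map; then push the standard Gaussian density of $\xi$ through its inverse by the change-of-variables formula, which is what produces the factor $\left\vert\overline{Q}^\T \J{F}(\theta)\right\vert$ and the factor $\exp\bigl(-\frac12\bigl\Vert\overline{Q}^\T F(\theta)\bigr\Vert^2\bigr)$ in \eqref{eq:RTOproposal}, with some care about the range of the map and values of $\xi$ not attained. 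None of these steps appears in your proposal, so the claimed statement remains unproven.)

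Read instead as a blind proof of Theorem~\ref{thm:linear}, your argument is essentially correct and close to the paper's: parts \textit{(i)}--\textit{(iii)} are handled identically, and for part \textit{(iv)} you write $\overline{Q}^\T \J{\widetilde{F}}(u)=\overline{R}^{-\T}\bigl(I+\J{g}(\overline{u})\,M\,\J{g}(u)\bigr)$ with $M=D^{-\T}A^\T\Gamma_{\rm obs}^{-1}AD^{-1}$, and show nonsingularity via the fact that $AB$ and $BA$ share a spectrum plus a similarity by the positive diagonal matrix $\bigl(\J{g}(u)\J{g}(\overline{u})\bigr)^{1/2}$, so all eigenvalues are real and at least $1$. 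The paper instead factors $\J{\widetilde{F}}(\overline{u})^\T\J{\widetilde{F}}(u)=\J{g}(\overline{u})\bigl(\J{g}(\overline{u})^{-1}\J{g}(u)^{-1}+M\bigr)\J{g}(u)$ as a product of three invertible matrices. Both routes are valid and of comparable difficulty; but neither is a proof of Theorem~\ref{thm:rtoprop}, which is what was requested.
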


We now prove Theorem~\ref{thm:linear} by checking Assumptions~\ref{assume:1} to \ref{assume:4} for the transformed forward model $\tilde{f}(u)$. 

\begin{proof}[Proof of Theorem~\ref{thm:linear}]
If $\tilde{f}$ is continuously differentiable, then $\widetilde{F}$ is continuously differentiable. Thus Assumptions~\ref{assume:1} and \ref{assume:2} are automatically satisfied. Assumption~\ref{assume:3} is also satisfied since
\begin{equation*} 
J_{\widetilde{F}}(\theta) = \begin{bmatrix} I \\ J_{\tilde{f}}(\theta) \end{bmatrix},
\end{equation*}
and regardless of $J_{\tilde{f}}(\theta)$, the columns of $\J{\widetilde{F}}(\theta)$ are linearly independent due to the identity matrix in the first $n$ rows of $\J{\widetilde{F}}(\theta)$. 

To show that Assumption~\ref{assume:4} holds, we use the form of the transformed forward model. Let the original linear forward model be $f(\theta) = A \theta$. Then the transformed forward model is
\[\tilde{f}(u) = f(D^{-1} g(u)) = AD^{-1} g(u).\]
Following the computations used to obtain (\ref{eq:TransPost}), the posterior on $u$ takes the form 
\begin{align*}
p(u|y) 
&\propto \exp{\left[ -\half (\tilde{f}(u) - y)^\T \Gamma_\textrm{obs}^{-1} (\tilde{f}(u) - y)  \right]} \exp{\left( -\half u^\T u  \right)}\\
&= \exp{\left(-\half \left\|\widetilde{F}(u)\right\|^2\right)},
\end{align*}
where
\begin{align*}
{\widetilde{F}}(u)&=\begin{bmatrix}u\\ \Gamma_\textrm{obs}^{-1/2} (\tilde{f}(u)-y) \end{bmatrix}, &
J_{\widetilde{F}}(u) & = \begin{bmatrix}I\\ \Gamma_\textrm{obs}^{-1/2} AD^{-1} J_g(u) \end{bmatrix}.
\end{align*}

Assumption~\ref{assume:4} requires that the matrix $\overline{Q}^\T \J{\widetilde{F}}(u)$ be invertible for all $u$ in the domain of $\widetilde{F}$. For any $u_1 \in \R^n$ and $u_2 \in \R^n$,
\begin{align*}
\J{\widetilde F}(u_1)^\T \J{\widetilde F}(u_2) &=  I +  J_g(u_1) D^{-\T} A^\T \Gamma_\textrm{obs}^{-1} A D^{-1} J_g(u_2) \\
&= J_g(u_1) \left( J_g(u_1)^{-1} J_g(u_2)^{-1} +  D^{-\T} A^\T \Gamma_\textrm{obs}^{-1} A D^{-1} \right) J_g(u_2).
\end{align*}
$J_g(u)$ is a positive diagonal matrix for any $u$, and $D^{-\T} A^\T \Gamma_\textrm{obs}^{-1} A D^{-1}$ is symmetric positive semi-definite. Then, the middle matrix is symmetric positive definite. Thus, $\J{\widetilde F}(u_1)^\T \J{\widetilde F}(u_2)$ is the product of three invertible matrices and is therefore invertible.

$\overline{Q}$ is obtained from the thin QR-decomposition of $\J{\widetilde F}(\overline{u})$, where $\overline{u}$ is the mode of the posterior defined on $u$. It follows that $\J{\widetilde F}(u)^\T \overline{Q} = \J{\widetilde F}(u)^\T \J{\widetilde F}(\overline{u}) \overline{R}^{-1}$ is invertible for any $u \in \R^n$. This shows that Assumption~\ref{assume:4} holds. Hence, Assumptions~\ref{assume:1} to \ref{assume:4} hold for the transformed forward model $\tilde{f}(u)$ and Theorem~\ref{thm:rtoprop} yields Theorem~\ref{thm:linear}.
\end{proof}

\section{Besov space priors as \texorpdfstring{$l_1$}{l1}-type priors} \label{sec:besov}
Following \cite{sparsity}, we start with a wavelet function $\psi \in \mathcal{L}_2([0,1])$ defined such that the family of functions 
\[ \psi_{j,k} (x) = 2^{\frac{j}{2}} \psi \big(2^j x -k\big), \spaces j, k \in \Z_+, \spaces 0 \le k \le 2^j-1,\]
is an orthonormal basis for $\mathcal{L}_2([0,1])$. One example of such a function is the Haar wavelet,
\[ \psi_\textrm{Haar} = \left\{\begin{matrix}1  & \text{ when } 0 < x < \half\\ -1 & \text{ when } \half < x < 1 \end{matrix}\right..\]
With a wavelet and corresponding basis, we can represent functions by the expansion
\[ f(x) = c_0 + \sum_{j=0}^\infty \sum_{k=0}^{2^j-1} w_{j,k} \psi_{j,k}(x), \spaces c_0 \defeq \int_0^1 f(x) dx, \spaces w_{j,k} \defeq \int_0^1 f(x) \psi_{j,k}(x) dx.\]
The Besov space $B_{p,q}^s ([0,1])$ contains functions over the interval $[0,1]$ with a finite Besov $B_{p,q}^s([0,1])$ norm, defined as  
\[\|f\|_{B_{p,q}^s ([0,1])} \defeq \left( |c_0|^q + \sum_{j=0}^\infty 2^{jq\big(s+\half-\frac{1}{p}\big)}\left( \sum_{n=0}^{2^j-1} |w_{j,k}|^p \right)^\frac{q}{p}\right)^\frac{1}{q}, \]
where $s \in \R$ and $p,q \ge 1$ are properties of the space, and $c_0, w_{j,k} \in \R$ are the coefficients of the expansion. The discrete Besov $B_{p,q}^s$ space norm, defined for a vector $\theta \in \R^n$ of size $n = 2^l$, is
\[ \|\theta\|_{B_{p,q}^s} \defeq  \left( |\hat c_0|^q + \sum_{j=0}^l 2^{jq\big(s+\half-\frac{1}{p}\big)}\left( \sum_{n=0}^{2^j-1} |\hat w_{j,k}|^p \right)^\frac{q}{p}\right)^\frac{1}{q}, \]
where $\hat c_0, \hat w_{j,k} \in \R$ are the coefficients 
\begin{align*}
\hat c_0 &= \frac{1}{n} \theta^\T \hat\phi_{0,0}, & \hat w_{j,k}&= \frac{1}{n} \theta^\T \hat\psi_{j,k}, \\
\hat\phi_{0,0} &= [1,\cdots, 1]^\T, & \hat\psi_{j,k}&= \left[\psi_{j,k}\left(\frac{1}{2n}\right), \psi_{j,k}\left(\frac{3}{2n}\right), \cdots, \psi_{j,k}\left(\frac{2n-1}{2n}\right) \right]^\T.
\end{align*}
Note that when $\theta \in \R^n$ is a discretization of the continuous function $f : [0,1] \rightarrow \R$,
\[ \theta = \left[f\left(\frac{1}{2n}\right), f\left(\frac{3}{2n}\right), \cdots, f\left(\frac{2n-1}{2n}\right) \right]^\T. \]
Then, the discrete norm $\|\theta\|_{B_{p,q}^s}$ is an approximation to the continuous norm $\|f\|_{B_{p,q}^s}$. When $p = q = 1$, the discrete Besov $B_{p,q}^s$ space norm becomes
\begin{align*} \|\theta\|_{B_{1,1}^s} &= |\hat c_0| + \sum_{j=0}^l \sum_{h=0}^{2^j-1} 2^{j\left(s-\half \right)} |\hat w_{j,k}| \\
&= \| W B\theta \|_1, \end{align*}
where the matrix $W \in \R^{n \times n}$ is diagonal with
\[W_{1,1} = \frac{1}{\sqrt{n}} \spaces \text{ and } \spaces W_{i,i} = \frac{1}{\sqrt{n}}2^{j\left(s-\half \right)} \text{ when }  2^j + 1 \le i \le 2^{j+1},  \spaces\]
and the matrix $B \in \R^{n \times n}$ is unitary with
\[B = \frac{1}{\sqrt n} \bigg[ \hat \phi_{0,0} \;\;\; \hat \psi_{0,0} \;\;\; \hat \psi_{1,0} \;\;\; \hat \psi_{1,1} \;\;\; \hat \psi_{2,0} \;\;\; \cdots \bigg]^\T.\]
Thus, we can write the Besov \B space prior in the form of (\ref{eq:l1prior}) by
\[ p(\theta) \defeq \exp\left(-\lambda \|\theta\|_{B_{1,1}^s}\right) = \exp\Big(-\lambda \| D \theta \|_1 \Big), \]
where $D=WB$, with $W$ and $B$ defined as above.

\section{Pointwise variance of Besov priors with Haar wavelets} \label{sec:priorvar}
Let $f$ be a random function distributed according to the Besov \B prior, using Haar wavelets. $f$ can be represented by the expansion
\[ f(x) = c_0 + \sum_{j=0}^{\infty} \sum_{k=0}^{2^j-1} w_{j,h} \phi_{j,h}(x) . \]
Fix any point $x^*$ and consider the random variable $f(x^*)$. For each level $j$, there is only one basis, $\phi_{j,h^*}$, that has a support containing $x^*$, where $h^*$ depends on both $x^*$ and $j$. Also, the magnitude of $\phi_{j,h^*}$ evaluated at $x^*$ is $2^{\frac{j}{2}}$. Thus,
\[ f(x^*) = c_0 + \sum_{j=0}^\infty \pm 2^{\frac{j}{2}} w_{j,h^*} \]
where, due to the Besov \B space prior,
\[ c_0 \sim \text{Laplace}(0,1), \spaces w_{j,h^*} \sim \text{Laplace}\big(0, 2^{-j(s-\half)}\big).\]
We sum the variance contribution from each coefficient.
\[ \Var \big[ f(x^*) \big] = 2 + \sum_{j=0}^\infty 2^j \big(2 \cdot 2^{-2j(s -\half)} \big) = 2 \bigg(1 + \sum_{j=0}^\infty  2^{-2j(s-1)} \bigg) \] 
Hence, the pointwise variance is finite when $s>1$ and does not converge when $s = 1$.

\bibliographystyle{siamplain}

\bibliography{references}

\end{document}